\documentclass[11pt]{article}

\usepackage{fullpage}
\usepackage{amsmath,amsfonts,amsthm,mathrsfs,mathpazo,xspace,hyperref,graphicx}
\usepackage{endnotes}
\usepackage{color}
\usepackage{bm}
\usepackage{times}
\usepackage{amssymb,latexsym}

\newtheorem{theorem}{Theorem}

\newtheorem{lemma}[theorem]{Lemma}

\newtheorem{claim}[theorem]{Claim}

\newtheorem{corollary}[theorem]{Corollary}

\newenvironment{step}
  {
    \begin{enumerate}

  }
  {\end{enumerate}}

\newenvironment{algorithm*}[1]
  {
    \begin{center}
      \hrulefill\\
      \textbf{#1}
  }
  {
    \vspace{-1\baselineskip}
    \hrulefill
    \end{center}
  }

\newenvironment{protocol*}[1]
  {
    \begin{center}
      \hrulefill\\
      \textbf{#1}
  }
  {
    \vspace{-1\baselineskip}
    \hrulefill
    \end{center}
  }

\DeclareMathOperator{\poly}{poly}

\newcommand{\ket}[1]{|#1\rangle}
\newcommand{\bra}[1]{\langle#1|}
\newcommand{\Tr}{\mbox{\rm Tr}}
\newcommand{\SWAP}{\mbox{\rm SWAP}}
\newcommand{\DEC}{\mbox{\rm DEC}}
\newcommand{\CHECK}{\mbox{\rm CHECK}}
\newcommand{\CK}{\mbox{\rm CK}}

\newcommand{\C}{\ensuremath{\mathbb{C}}}
\newcommand{\N}{\ensuremath{\mathbb{N}}}

\newcommand{\mH}{\ensuremath{\mathcal{H}}}

\newcommand{\rQ}{\ensuremath{\mathrm{Q}}}
\newcommand{\rR}{\ensuremath{\mathrm{R}}}
\newcommand{\rP}{\ensuremath{\mathrm{P}}}
\newcommand{\rS}{\ensuremath{\mathrm{S}}}

\newcommand{\cC}{\ensuremath{\mathcal{C}}}
\newcommand{\cD}{\ensuremath{\mathcal{D}}}

\newcommand{\setft}[1]{\mathrm{#1}}
\newcommand{\density}[1]{\setft{D}\!\left(#1\right)}
\newcommand{\lin}[1]{\setft{L}\!\left(#1\right)}
\newcommand{\pos}[1]{\setft{Pos}\!\left(#1\right)}

\newcommand{\QMA}{\mathrm{QMA}}
\newcommand{\QCMA}{\mathrm{QCMA}}
\newcommand{\NP}{\mathrm{NP}}
\newcommand{\MIP}{\mathrm{MIP}}
\newcommand{\QMIP}{\mathrm{QMIP}}
\newcommand{\NEXP}{\mathrm{NEXP}}
\newcommand{\EXP}{\mathrm{EXP}}

\newcommand{\beq}{\begin{eqnarray}}
\newcommand{\eeq}{\end{eqnarray}}

\newcommand{\Id}{\ensuremath{\mathop{\rm Id}\nolimits}}

\newcommand{\eps}{\varepsilon}

\definecolor{mygrey}{gray}{0.50}

\begin{document}

\title{A multiprover interactive proof system for \\the local Hamiltonian problem}
\author{Joseph Fitzsimons\thanks{Singapore University of Technology and Design and Centre for Quantum Technologies, National University of Singapore, Singapore. Email: \texttt{joe.fitzsimons@nus.edu.sg}. } \qquad\qquad Thomas Vidick\thanks{California Institute of Technology, Pasadena, CA, USA. Email: \texttt{vidick@cms.caltech.edu}.}}
\date{}
\maketitle

\begin{abstract}
We give a quantum interactive proof system for the local Hamiltonian problem on $n$ qubits in which (i) the verifier has a single round of interaction with five entangled provers, (ii) the verifier sends a classical message on $O(\log n)$ bits to each prover, who reply with a constant number of qubits, and (iii) completeness and soundness are separated by an inverse polynomial in $n$. As the same class of proof systems, without entanglement between the provers, is included in $\QCMA$, our result provides the first indication that quantum multiprover interactive proof systems with entangled provers may be strictly more powerful than unentangled-prover interactive proof systems. A distinguishing feature of our protocol is that the completeness property requires honest provers to share a large entangled state, obtained as the encoding of the ground state of the local Hamiltonian via an error-correcting code. Our result can be interpreted as a first step towards a multiprover variant of the quantum PCP conjecture. 
\end{abstract}

\section{Introduction}

The PCP theorem~\cite{AroSaf98JACM,AroLunMotSudSze98JACM} asserts that any language in $\NP$ admits proofs of membership that can be efficiently verified using a randomized procedure which makes the correct decision with high probability while only ever reading a constant number of bits of the proof. An equivalent formulation of the PCP theorem, that has been particularly useful in applications to hardness of approximation~\cite{Hastad01} as well as in devising further improvements to the theorem~\cite{Raz98}, uses the language of multiplayer games. A two-player game $G$ is specified by question sets $Q,Q'$, answer sets $A,A'$, a distribution $\pi$ on $Q\times Q'$ and a verification criterion $V\subseteq (A\times A') \times (Q\times Q')$. The value $\omega$ of $G$ is defined as the maximum, over all assignments $f:Q\to A$, $f':Q'\to A'$, of the average number of valid answers given by the assignments: $\omega(G) = \sup_{f,f'}\sum_{q,q'} \pi(q,q') V(f(q),f'(q');q,q')$. The PCP theorem is equivalent to the statement that $\omega(G)$ is NP-hard to approximate to within a constant additive factor, even for the case of answer sets $A$, $A'$ of constant size. To see the connection, consider the following ``consistency game'': the verifier, instead of directly reading bits $i_1,\ldots,i_k$ of the proof, asks a first player for the entries at those locations and a second player for the entry corresponding to a single location $i_j$, where $j$ is chosen uniformly at random in $\{1,\ldots,k\}$. The verifier accepts if and only if the first player's answers correspond to entries that he would have accepted had he read them directly from the proof, \emph{and} the second player's answer is consistent with that of the first. It is not hard to see that the value of the consistency game is directly related to the fraction of checks satisfied by the optimal PCP proof, so that the respective complexities of deciding whether either is close to $1$ (under the appropriate gap promise) are identical. 

The quantum analogue of the local proof checking problem was introduced by Kitaev~\cite{KitSheVya02}. An instance of the $k$-local Hamiltonian problem (LH) is specified by $m$ local Hamiltonians $H_1,\ldots,H_m$, where each $H_i$ is a Hermitian matrix of norm at most $1$ acting on at most $k$ out of a total of $n$ qubits. The instance is positive if there exists a quantum proof (a quantum state $\ket{\Psi}$ on the $n$ qubits) satisfying a fraction at least $(1-a)$ of the constraints; precisely, if $H=\sum_i H_i$ (where each $H_i$ is implicitly tensored with the identity on the remaining qubits) has an eigenvalue at most $am$. If all eigenvalues of $H$ are larger than $bm$ for some $b> a$ the instance is negative. The introduction of the local Hamiltonian problem initiated what is now the burgeoning field of Hamiltonian complexity~\cite{Osborne12hc,GharibianHL14qhc}, expanding well beyond the initial formal connection with classical constraint satisfaction problems to encompass the computational study of a range of problems motivated by condensed-matter physics. 

Kitaev proved the ``quantum Cook-Levin theorem'': he introduced the class $\QMA$ of languages that admit efficiently verifiable quantum proofs, and showed that the local Hamiltonian problem is $\QMA$-complete for some $a,b$ satisfying $b-a = \Theta(\poly^{-1}(n))$. The natural question of whether a quantum analogue of the PCP theorem holds was first posed in~\cite{AharonovN02qma}; it asks whether the local Hamiltonian problem remains $\QMA$-hard for values $b-a=\Omega(1)$. This problem has captured the imagination of many researchers~\cite{Aaronson09qpcp,Has13qpcp,HasF13qpcp}, but very little is known. If anything recent results~\cite{BrandaoH13product,AharonovAV13qpcp} place strong limitations on the parameters, including the locality $k$ or the degree of the constraint graph, for which the conjecture may be valid, showing that it may only hold for ranges of parameters that appear to be much more limited than those for which the classical PCP theorem is known to be true. 

\medskip

In this paper we shed new light on the complexity of the local Hamiltonian problem by recasting it in the language of quantum interactive proofs with entangled provers. In doing so we are motivated by the existing classical connection between local proof verification and multiplayer games, which as already mentioned has been instrumental both in the development of the PCP theorem (and in particular its second proof by Dinur~\cite{Dinur07pcp}) and for applications. Does this connection extend to the quantum setting? While quantum multiprover interactive proof systems have been intensely studied for their own sake~\cite{KobMat03JCSS,KKMV09,IV12}, prior to our work no nontrivial relation was known between the class $\QMA_\EXP$, the exponentially scaled-up version of $\QMA$, and the classes $\QMIP^*$ or $\QMIP$ of languages having quantum interactive proof systems with entangled or unentangled provers respectively. In fact, the latter is known to equal $\NEXP$~\cite{KobMat03JCSS}, while the former was only recently shown to \emph{contain} $\NEXP$~\cite{IV12}. However, no upper bound on $\QMIP^*$ is known, so that one may ask --- could $\QMIP^*$ be a \emph{larger} class than $\QMIP=\NEXP$? The only distinction between the two classes is the presence of entanglement between the provers, which until now (and with some rare exceptions~\cite{KKMV09}) has for the most part been understood as a nefarious resource that could used by the provers in order to break a protocol's soundness. Giving a positive answer to the question, however, requires finding a \emph{beneficial} use of entanglement, as it entails devising a protocol in which even honest provers are \emph{required} to share an entangled state over a superpolynomial number of qubits in order to succeed on positive instances.\footnote{The class $\QMIP^{(l.e.)}$ of languages having quantum multiprover interactive proof systems in which the provers share an entangled state on at most a polynomial number of qubits is also known to be included in $\NEXP$\cite{KobMat03JCSS}.}

A natural target for going beyond $\NEXP\subseteq\QMIP^*$ consists in devising protocols establishing the inclusion of $\QMA_{\EXP}$ in $\QMIP^*$. Proving such inclusion, however, immediately runs into a number of serious difficulties. To see why, consider the following attempt at designing a quantum interactive proof system for the local Hamiltonian problem that mimics the classical construction of the consistency game (which, as described earlier, easily leads to a proof of $\NEXP\subseteq\MIP$ assuming the PCP theorem). Suppose that the first player is asked to provide a constant-sized subset of the proof qubits, corresponding to a local constraint $H_j$ which the verifier can then check. In the classical case, the second player is asked for just one of the bits asked to the first player; this is used to verify   that the first players' answers to any of the bits he was asked about depends on that bit only, and not on the subset of which it is part. In the quantum case this approach is all but ruled out by the no-cloning principle: any given proof qubit can be placed in the hands of one player only, but it cannot be duplicated! Hence the direct quantum analogue of the consistency game does not have \emph{completeness}: even satisfiable instances of the local Hamiltonian problem may not lead to a winning strategy for the players. 

Natural workarounds to this difficulty run into different obstacles. For instance, consider splitting the proof (e.g. the ground state of the local Hamiltonian instance) qubits into two (or more) sets $S_1$ and $S_2$, and only asking prover $i$ for qubits coming from set $S_i$. While this leads to a game which does have perfect completeness, the fact that the sets need to be specified a priori can, at least in some cases, prevent the \emph{soundness} property from holding. To see why, consider the simple example of a one-dimensional nearest-neighbor Hamiltonian in which each term is a projection on the orthogonal complement of an EPR pair split across two adjacent qubits. This Hamiltonian is highly frustrated, as any qubit can only form an EPR pair with its left \emph{or} right neighbor, not both. Nevertheless, the corresponding game in which $S_1$ (resp. $S_2$) is  the set of all even-numbered (resp. odd-numbered) qubits has a perfect strategy: the players share a single EPR pair and systematically send back their respective half, independently of the question they are asked! Although in this particular case the issue is easily fixed by choosing a different splitting of the proof qubits, in general it seems like any such splitting will be arbitrary and could be taken advantage of by the provers. 

\subsection{Results}

Our main result is the design of an interactive proof system for the local Hamiltonian problem which circumvents the aforementioned difficulties. This is the first time a multiprover interactive proof system is given for a $\QMA$-complete, instead of $\NP$-complete, problem, and it provides strong indication that entangled proof systems may be strictly more powerful than their unentangled counterparts. Formally, we show the following.  

\begin{theorem}\label{thm:main}
Let $k$ be an integer. There exists constants $C,c>0$ depending on $k$ only such that the following holds. Let $H=\sum_{i=1}^m H_i$ be an instance of the $k$-local Hamiltonian problem on $n$ qubits, such that the number of constraints is $m=\poly(n)$. There exists a one-round interactive protocol between a quantum polynomial-time verifier and $r=5$ entangled quantum provers such that: 
\begin{itemize}
\item The verifier sends $O(\log n)$-bit classical messages to each prover,
\item The provers respond with at most $k$ qubits each,
\item If there exists a state $\ket{\Gamma}$ such that $\bra{\Gamma}H\ket{\Gamma}\leq am$ then there is a strategy for the provers that is accepted with probability at least $1-a/2$, 
\item If for every state $\ket{\Psi}$, $\bra{\Psi}H\ket{\Psi}\geq bm$ then any strategy of the provers is accepted with probability at most $1-Cb/n^{c}$.
\end{itemize}
\end{theorem}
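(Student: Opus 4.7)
The plan is to build on the idea that the honest provers share an encoding of the ground state of $H$ via a small quantum error-correcting code. Concretely, use a distance-3 code such as the $[[5,1,3]]$ code to encode each of the $n$ proof qubits of a ground state $\ket{\Gamma}$ into $5$ physical qubits, and distribute the resulting $5n$-qubit encoded state among the five provers so that prover $i$ holds the $i$-th physical qubit of the encoding of every logical qubit. Because any single qubit of the $[[5,1,3]]$ code is maximally mixed over the logical subspace, no prover can individually reconstruct a logical qubit; on the other hand, the verifier can recover any $k$ logical qubits (enough to measure a local term $H_j$) by collecting one physical qubit from each of the five provers per logical qubit, which exactly matches the $O(\log n)$-bit question and $k$-qubit answer size constraints.

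For the protocol, the verifier chooses uniformly between a constant number of tests designed to enforce that the provers share such an encoded state. An \emph{energy test} picks a random Hamiltonian term $H_j$, requests from each prover the physical qubit corresponding to each of the $k$ logical qubits in the support of $H_j$, decodes, and measures $H_j$. A \emph{code test} picks a random logical qubit, requests from each prover the single physical qubit encoding it, and projects onto the code subspace by measuring the stabilizer generators of $[[5,1,3]]$. A \emph{permutation test} exploits the symmetry of the code (invariant under cyclic permutations of its five qubits, up to a local Clifford) by asking the provers to answer as if the roles were cycled and verifying consistency. Completeness is immediate: honest provers accept the code and permutation tests with certainty and the energy test with probability at least $1-a$ on average over $j$, so that with an appropriate weighting the overall acceptance probability is at least $1 - a/2$.

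The core of the argument, and the main obstacle, is the \emph{soundness} analysis. The goal is to show that any prover strategy accepted with probability $1 - \eta$ across all tests must correspond, up to a local isometry on each prover, to a genuine shared $5n$-qubit encoded state from which a global logical state $\ket{\Psi}$ on $n$ qubits can be extracted. The plan is first to derive, from passing the code and permutation tests, a distributed self-testing statement: the provers' answer registers on any given logical qubit approximately lie in the image of the code, and their local measurements on different provers are symmetrically related so that the notion of ``the five physical qubits on logical index $\ell$'' is well-defined consistently across all $\ell$. This step is delicate because it must hold \emph{uniformly} across all logical qubits simultaneously, not only on a random one; the union bound over $n$ indices is what will force a polynomial loss $n^{c}$ in the robustness parameters, and is the source of the inverse-polynomial gap in the final statement. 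Granted the self-testing conclusion, the energy test on a random $H_j$ reduces to measuring $H_j$ on the extracted state $\ket{\Psi}$, so acceptance with probability at least $1 - C b / n^{c}$ implies $\bra{\Psi}H\ket{\Psi} < bm$, contradicting the negative-instance hypothesis. Turning the approximate commutation and symmetry relations between prover measurements into an actual globally defined state $\ket{\Psi}$ is where most of the technical work is concentrated, and will likely require a careful combination of Naimark-style dilations, operator rounding arguments for approximately commuting measurements, and a quantum union bound to amplify a local self-test into a global extraction statement.
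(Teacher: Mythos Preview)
Your protocol has a structural gap that the paper's protocol is specifically designed to close. In your version, a prover receives two distinguishable question types: a $k$-tuple of indices (energy test) or a single index (code and permutation tests). Nothing in your tests links a prover's behavior on the $k$-tuple questions to its behavior on the single-index questions. A cheating prover can therefore answer every single-index question honestly from a genuine encoded ground state (passing your code and permutation tests perfectly) while answering every $k$-tuple question from an entirely unrelated state tailored to make each $H_j$ look low-energy; your permutation test only relates the five provers' roles to one another, not a given prover's set-answers to its single-qubit answers. This is exactly the obstruction discussed in the introduction (the ``overlapping EPR pairs'' example), and your proposal does not explain how it is avoided.

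The paper's fix is that in its code test (test~(b)) one randomly chosen prover is asked a \emph{set} $S$---the same question it would receive in the energy test---while the remaining four provers are asked a single index $i\in S$. The verifier then checks that the $i$-th qubit of the set-answer, together with the four single-qubit answers, lies in the codespace. Because the special prover cannot tell which test is being run, and because the distance-$3$ code pins down the fifth share once the other four are fixed, this forces the qubit labeled $i$ inside the set-answer to coincide (approximately) with the qubit that would be sent on the single-index question $i$. Formally this is Claim~\ref{claim:cidj}, and it is the engine of the whole soundness proof. The paper then does not do self-testing or operator rounding at all: it directly defines a candidate witness $\sigma$ by sequentially applying the extraction maps $C_1,\ldots,C_n$ to the provers' shared state and decoding, and the polynomial loss $n^{c}$ arises from propagating the per-step error of Claim~\ref{claim:cidj} through this depth-$n$ circuit (Claims~\ref{claim:check}--\ref{claim:low-energy}), not from a union bound over indices.
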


The local Hamiltonian problem is known to be $\QMA$-complete for $k=2$, $a$ that is exponentially small and $b$ at least an inverse polynomial~\cite{KempeKR06lh}. The following corollary, which we state using the language of multiplayer games, is thus a direct consequence of Theorem~\ref{thm:main}: 

\begin{corollary}\label{cor:qma}
The problem of approximating, to within an additive inverse polynomial, the referee's maximum acceptance probability in a quantum multiplayer game in which questions from the referee are classical on $O(\log n)$-bits and answers from the players are quantum on $O(1)$ qubits is $\QMA$-hard. Furthermore the same holds when restricted to games in which there is a single round of interaction between the referee and at most $5$ players.
\end{corollary}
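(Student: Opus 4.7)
The plan is to combine Theorem~\ref{thm:main} with the known $\QMA$-completeness of the $2$-local Hamiltonian problem. First, I invoke the result of Kempe, Kitaev and Regev~\cite{KempeKR06lh}, which shows that deciding, given a $2$-local Hamiltonian $H=\sum_{i=1}^m H_i$ on $n$ qubits with $m=\poly(n)$, whether its smallest eigenvalue is at most $am$ (YES) or at least $bm$ (NO) is $\QMA$-hard for parameters $a$ exponentially small in $n$ and $b-a \geq 1/\poly(n)$ (so in particular $b \geq 1/\poly(n)$ as well).

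Second, I apply Theorem~\ref{thm:main} with $k=2$ to such an instance, obtaining constants $C,c>0$ and a one-round five-prover interactive protocol of the format required by the corollary: $O(\log n)$-bit classical questions from the verifier and $O(1)$-qubit responses from each prover. On YES instances, completeness gives a strategy accepted with probability at least $1-a/2 \geq 1 - 2^{-\poly(n)}$. On NO instances, soundness bounds the maximum acceptance probability by $1 - Cb/n^c$. Because $b \geq 1/\poly(n)$, the resulting completeness-soundness gap is at least an inverse polynomial in $n$.

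Third, I reinterpret the protocol as a quantum multiplayer game by identifying the verifier with the referee and the provers with the players. Any procedure that approximates the referee's maximum acceptance probability to within an additive error smaller than the gap above must distinguish YES from NO instances of the $2$-local Hamiltonian problem, and therefore decide an arbitrary language in $\QMA$. This yields the claimed $\QMA$-hardness, and the ``furthermore'' clause is immediate because Theorem~\ref{thm:main} already produces a single-round protocol with exactly $5$ players.

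Since the corollary is essentially a translation of Theorem~\ref{thm:main} into the language of multiplayer games, no real obstacle is expected; the one point requiring care is to verify that the parameter regime in which~\cite{KempeKR06lh} establishes $\QMA$-hardness (exponentially small $a$, inverse-polynomial $b$) matches the regime in which Theorem~\ref{thm:main} delivers a nontrivial inverse-polynomial separation between completeness and soundness, which the computation above confirms.
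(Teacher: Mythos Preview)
Your proposal is correct and matches the paper's own treatment: the paper does not give a separate proof of Corollary~\ref{cor:qma} but simply states that it is a direct consequence of Theorem~\ref{thm:main} together with the $\QMA$-completeness of the $2$-local Hamiltonian problem~\cite{KempeKR06lh} for exponentially small $a$ and inverse-polynomial $b$. Your write-up just spells out this one-line deduction in detail, including the verification that the resulting completeness/soundness gap is inverse polynomial.
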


The same problem but with no entanglement between the players is contained in $\QCMA$: the players' constant-sized quantum answers can be given as a classical description~\cite{KobMat03JCSS}. It is also known to be $\NP$-hard, even when restricted to classical answers from the players and for constant additive approximations~\cite{Vidick13xor}. However, no \emph{upper bound} is known on the complexity of the problem considered in Corollary~\ref{cor:qma}, which is not even known to be decidable~\cite{ScholzW08,JungeNPPSW11} (and there is no known a priori bound on the amount of entanglement that may be beneficial to the players). 
Corollary~\ref{cor:qma} provides the first indication that entanglement indeed \emph{increases} the verifying power of the referee, at least in the range of inverse-polynomial approximations, showing that unless $\QCMA=\QMA$ the complexity of entangled (quantum) games is strictly \emph{larger} than that of non-entangled (quantum) games.

\paragraph{Consequences for interactive proof systems with entangled provers.}

We can scale up our result to $\QMA_\EXP$, the exponential-witness size version of $\QMA$ (see Section~\ref{sec:prelim} for the definition) to obtain a formal separation between quantum multiprover interactive proof systems with and without entanglement between the provers. Let $\QMIP^*(r,t,c,s)$ be the class of languages that have quantum interactive-proof systems with $r$ provers, $t$ rounds of interaction, completeness $c$ and soundness $s$ (see Section~\ref{sec:prelim} for the complete definition).

\begin{corollary}\label{cor:qmip}
There exists a polynomial $q$ such that 
$$\QMA_{\EXP} \subseteq \QMIP^*(5,1,1-2^{-(q+1)},1-2^{-q})$$
and hence 
$$\QMIP(5,1,1-2^{-(q+1)},1-2^{-q})\subsetneq \QMIP^*(5,1,1-2^{-(q+1)},1-2^{-q})$$
unless $\NEXP=\QMA_{\EXP}$. 
\end{corollary}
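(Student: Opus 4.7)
The plan is to scale up Theorem~\ref{thm:main} from $\QMA$ to $\QMA_{\EXP}$ via a standard succinct-description / padding argument, and then combine the resulting containment with the known upper bound $\QMIP = \NEXP$ of~\cite{KobMat03JCSS} to obtain the claimed strict separation.

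First I would recall that $\QMA_{\EXP}$ is the analogue of $\QMA$ in which the witness has $N = 2^{\poly(n)}$ qubits and the verifier runs in time $2^{\poly(n)}$. Scaling Kitaev's construction (combined with the $2$-local gadget of~\cite{KempeKR06lh}) yields that every language in $\QMA_{\EXP}$ reduces to an instance of the $2$-local Hamiltonian problem on $N$ qubits with $m = \poly(N) = 2^{\poly(n)}$ constraints, completeness parameter $a = 2^{-\poly(n)}$, and promise gap $b - a \geq 2^{-\poly(n)}$. The key point is that the list $H_1,\ldots,H_m$ is given \emph{implicitly}: a poly$(n)$-size classical circuit takes an index $i \in \{1,\ldots,m\}$ and outputs the description of $H_i$ together with the $O(1)$ qubits on which it acts. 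This is the standard succinct encoding that underlies results such as $\NEXP \subseteq \MIP$.

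Next I would apply the protocol of Theorem~\ref{thm:main} to this succinctly described instance. Inspection of the protocol shows that in one round the verifier picks a constant number of random indices in $\{1,\ldots,m\}$ and in $\{1,\ldots,N\}$, sends $O(\log N) = \poly(n)$ bits to each of the five provers, and then performs a measurement on the $O(1)$-qubit responses whose description depends only on the local descriptions of the sampled $H_i$. Since the verifier never needs to enumerate the constraints or qubits explicitly, all of its operations run in $\poly(\log N,\log m) = \poly(n)$ quantum time. Theorem~\ref{thm:main} then gives completeness at least $1 - a/2 = 1 - 2^{-\poly(n)}$ and soundness at most $1 - Cb/N^c = 1 - 2^{-\poly(n)}$, which by adjusting parameters and padding completeness upward (e.g. by repeating an acceptance test a fixed number of times) produces a polynomial $q$ with completeness $1 - 2^{-(q+1)}$ and soundness $1 - 2^{-q}$. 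This establishes
\[ \QMA_{\EXP} \subseteq \QMIP^*\!\left(5,\,1,\,1-2^{-(q+1)},\,1-2^{-q}\right). \]
For the strict separation, the upper bound $\QMIP \subseteq \NEXP$ of~\cite{KobMat03JCSS} applies uniformly for any inverse-exponential gap, yielding $\QMIP(5,1,1-2^{-(q+1)},1-2^{-q}) \subseteq \NEXP$. Because $\NEXP \subseteq \QMA_{\EXP}$ trivially (a classical witness is also a quantum witness), equality of the two classes at these parameters would force $\QMA_{\EXP} = \NEXP$, contradicting the hypothesis.

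The main obstacle is the scaling step: one must verify that the protocol of Theorem~\ref{thm:main} is genuinely \emph{black-box} in the Hamiltonian, i.e.\ that both its description by the verifier and its soundness analysis refer only to indices of constraints and qubits together with $O(1)$-qubit local operations, and never to a quantity that depends explicitly on the full list of $m$ constraints. Provided this holds --- as is the case in every analogous scale-up from $\MIP$ to $\NEXP$ --- the remaining arguments are routine manipulations of completeness and soundness parameters.
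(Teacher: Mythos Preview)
Your proposal is correct and matches the paper's approach for the first inclusion: the paper also obtains $\QMA_{\EXP}\subseteq\QMIP^*(5,1,1-2^{-(q+1)},1-2^{-q})$ by scaling up Theorem~\ref{thm:main} to the succinct $2$-local Hamiltonian problem on $N=2^{\poly(n)}$ qubits, and your discussion of why the verifier remains polynomial-time (random index sampling plus evaluation of the succinct circuit giving $H_j$) is exactly the point that needs to be checked. One cosmetic remark: the completeness parameter is tightened not by ``padding'' or repetition but by choosing $a$ sufficiently small in the $\QMA_{\EXP}$-hardness reduction (via Marriott--Watrous amplification before Kitaev's construction), so that $a/2\le 2^{-(q+1)}$ once $q$ is fixed by the soundness bound $Cb/N^c$.

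For the strict separation you take a slightly different route than the paper. The paper obtains the non-obvious containment $\QMIP(5,1,\ldots)\subseteq\QMIP^*(5,1,\ldots)$ via the chain $\QMIP\subseteq\NEXP\subseteq\MIP^*(3,1,1,1-1/\poly)\subseteq\QMIP^*(5,1,\ldots)$, invoking~\cite{IV12} for the middle step. You instead use $\QMIP\subseteq\NEXP\subseteq\QMA_{\EXP}\subseteq\QMIP^*(5,1,\ldots)$, where the last inclusion is precisely the first part of the corollary you have just proved. Your route is more self-contained --- it avoids appealing to the separate result of~\cite{IV12} --- while the paper's route has the advantage of establishing $\NEXP\subseteq\QMIP^*$ with a perfect-completeness, inverse-polynomial-gap protocol rather than the exponentially-small-gap protocol produced here. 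Either chain closes the argument in the same way: equality would force $\QMA_{\EXP}\subseteq\NEXP$, hence $\QMA_{\EXP}=\NEXP$.
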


The corollary follows from the fact that $\QMIP(5,1,1-2^{-(q+1)},1-2^{-q}) \subseteq \NEXP$~\cite{KobMat03JCSS} and $\NEXP\subseteq \MIP^*(3,1,1,1-1/\poly)$~\cite{IV12} together with the observation $\MIP^*(3,1,1,1-1/\poly) \subseteq \QMIP^*(5,1,1-2^{-(q+1)},1-2^{-q})$. 

We note that even though it is known that $\MIP^* = \QMIP^*$~\cite{ReichardtUV13nature} the above corollary falls short of proving a separation between $\MIP=\NEXP$ and $\MIP^*$. The reason is that the transformation from a $\QMIP^*$ to a $\MIP^*$ protocol in~\cite{ReichardtUV13nature} requires the completeness and soundness parameters of the $\QMIP^*$ protocol to be separated by an inverse polynomial in the input size, whereas our construction only gives an inverse exponential separation. 

\subsection{Proof idea}

Suppose given an instance $H=\sum H_j$ of the local Hamiltonian problem, where each term $H_j$ acts on a subset $S_j=\{i_1,\ldots,i_k\}$ of at most $k$ out of the $n$ qubits. Given an explicit description of $H$, the goal of the verifier is to decide whether there exists a ``proof'' $\ket{\Psi}$ that satisfies most terms $H_j$, i.e. such that the total ``energy'' $\bra{\Psi}H\ket{\Psi}$ is below a certain threshold value. As already mentioned, the main challenge in achieving this is that the verifier will only ever receive, at best, a logarithmic number of qubits of the proof from the provers. Although this easily allows him to estimate the energy $\bra{\Psi}H_j\ket{\Psi}$ of any local term $H_j$, the difficulty is to ensure that the qubits received in response to different queries, associated with different local terms $H_j$, are \emph{globally consistent} --- that they can be ``patched together'' into an actual proof $\ket{\Psi}$ that has low energy with respect to $H$. This difficulty is unique to the case of quantum proofs: if we were working with classical assignments, as explained earlier a simple consistency check would be sufficient to enforce that the provers' answers can be combined into a single assignment satisfying most clauses. But how does one devise a consistency check for quantum proofs, when in general it is not even possible to check whether two quantum states agree locally?\footnote{Pure quantum states $\ket{\Psi}$ and $\ket{\Phi}$ can be compared using the so-called SWAP test. However, for mixed states this test no longer works, and in fact checking consistency of reduced density matrices, even when specified explicitly, is itself a $\QMA$-complete problem~\cite{Liu06consistency}. We refer to~\cite{AharonovAV13qpcp} for more on the difficulties posed by locally checking consistency of quantum states.}

We suggest the following workaround. Our main goal is to ensure that, when a prover is asked for its share of a certain qubit $i_\ell$, or $i_{\ell'}$, of the proof, the actual qubits that it sends back to the verifier in each case do indeed correspond to distinct physical qubits --- that they do not ``overlap'', or even correspond to the same physical qubit, as was the case in our description of a strategy for the frustrated Hamiltonian projecting on overlapping EPR pairs. To enforce this, instead of asking the (honest) provers to directly split the qubits of the original proof between themselves we ask them to share an \emph{encoding} of the proof: each ``logical'' qubit of $\ket{\Psi}$ should be individually encoded into five ``physical'' qubits using a quantum error-correcting code. Each of five provers should then be given one of the five shares associated with each of the original proof's qubits. (Five is the smallest number of qubits for a quantum error-correcting code satisfying the properties we need; although we did not investigate this further a four-qubit error-detecting code may also suffice.)

Given this (presumed) splitting of the proof, we introduce the following protocol, comprised of two tests each applied with probability $1/2$ by the verifier. Observe first that under our encoding it remains easy for the verifier to estimate the energy of any $k$-local term $H_j$: he can ask each of the five provers for its corresponding share of each qubit on which a randomly chosen $H_j$ acts, decode the results, and measure the energy of the resulting qubits with respect to $H_j$. This only requires each prover to send back $k$ qubits to the verifier, and constitutes the first test in our protocol. 

Next consider the following additional test. The verifier chooses a $k$-element subset $S=\{i_1,\ldots,i_k\}$ of $\{1,\ldots,n\}$ uniformly at random. He also selects an index $\ell\in\{1,\ldots,k\}$ at random and asks four out of the five provers (again chosen at random) for their respective share of qubit $i_\ell$ only. To the last prover he asks for its respective shares of all qubits in $S$. (Note that in case $S$ corresponded to the set of $k$ qubits on which a local term $H_j$ acts the last prover cannot distinguish whether it is this test or the first that is being performed, and this will be important for the proof.) The verifier checks that all shares that he received associated with qubit $i_\ell$ lie in the codespace, and rejects the provers if not. 

In this second test the messages sent back by the first four provers only depend on qubit $i_\ell$. The key point is that, informally, given their four respective answer qubits there can exist at most one additional qubit that is entangled with them in a way that completes a valid codeword. Indeed (and again informally), if there were two such qubits it would imply that it is possible for the ``environment'' to entangle itself with a codeword through acting on a single qubit and without being detected by the code --- this possibility is excluded as long as the code is required to correct (or even just detect) all single-qubit errors. Thus this additional test enforces that the qubit sent back by the fifth prover in response to query $i_\ell$ is uniquely specified by the query $i_\ell$; this is acheived by ``locking'' the qubit with the other four provers' answers via the codespace. 

Although the above provides some intuition, proving soundness of the protocol remains technically challenging. We need to show how a complete proof $\ket{\Psi}$ serving as a witness for the energy of the Hamiltonian $H$ can (at least in principle) be reconstructed from prover strategies that are successful in the protocol. Formally each prover's strategy is specified by a pair of unitaries, one for each type of query from the verifier. The difficulty in proving that these unitaries are ``compatible'' and can be composed so as to reconstruct $\ket{\Psi}$ from the provers' entangled registers --- indeed, note a prover may apply an arbitrary transformation to its private space before answering any of the verifier's queries. Our proof specifies an explicit circuit, based on the provers' unitaries, for reconstructing $\ket{\Psi}$ from their initial entangled state. The depth of this circuit is linear in the number of qubits, and it is ultimately this which leads to the polynomial dependence of the soundness parameter on the number of qubits in the proof.


\subsection{Open questions}

Our work gives the first indication that multi-prover interactive proof systems with entangled provers may be strictly more powerful than their purely classical counterparts. Our protocol relies on the use of quantum communication from the provers to the verifier. Although it is known that quantum communication does not increase the power of entangled-prover interactive proof systems, $\QMIP^* = \MIP^*$~\cite{ReichardtUV13nature}, the technique used in~\cite{ReichardtUV13nature} to replace quantum messages by classical ones introduce a polynomial amount of error that, at least if applied na\"ively, would close the completeness/soundness gap of our protocol. We thus leave the possibility of achieving the same results as our ours through a purely classical interaction as an interesting open question. 

The main drawback of our protocol is the scaling of the completeness/soundness gap with the size of the local Hamiltonian instance. The most important question that we leave open for future work is to increase this gap from inverse exponential to inverse polynomial, leading to the inclusion $\QMA_\EXP \subseteq \QMIP^*$. Together with $\QMIP^* = \MIP^*$~\cite{ReichardtUV13nature} such a result would in particular reprove the main result of~\cite{IV12}, and we expect it to pose a significant challenge. Of importance in itself, research on this question could lead to the development of techniques useful to the study of the quantum PCP conjecture~\cite{AharonovAV13qpcp}. To stimulate its exploration we propose that the inclusion  $\QMA_\EXP \subseteq \QMIP^*$ be taken as a second variant of ``quantum PCP conjecture'' --- one we could call the ``interactive-proof QPCP'', in contrast to the ``proof-checking QPCP'' that has so far been the accepted formulation (see e.g. Conjecture~1.4 in~\cite{AharonovAV13qpcp}). No implication is known between the two conjectures; our work provides a first step towards the former, making it potentially more approachable than the latter. 

\paragraph{Acknowledgments.} This work was started while both authors were hosted by the Simons Institute in Berkeley, whose financial support we gratefully acknowledge. The second author is grateful to Dorit Aharonov and Umesh Vazirani for pressing him to expose the question investigated in this paper during an open problems session organized at the institute. Joseph Fitzimons' research is supported by the Singapore National Research Foundation under NRF Award No. NRF-NRFF2013-01. Thomas Vidick's research was supported in part by the Simons Institute and the Ministry of Education, Singapore under the Tier 3 grant MOE2012-T3-1-009.

\section{Preliminaries}\label{sec:prelim}

\paragraph{Notation.} Given a string $x$ we let $|x|$ denote its length. For a set $S$, $|S|$ is its cardinality. For a positive integer $n$ we abbreviate $\{1,\ldots,n\}$ by $[n]$. We use a calligraphic $\mH$ to denote finite-dimensional Hilbert spaces, and roman letters $\rQ,\rR,\ldots$ to denote quantum registers. The Hilbert space associated with register $\rR$ is $\mH_\rR$. We will often, though not always, index kets and bras for quantum states by the names of the registers on which the state lies, e.g. $\ket{\Psi}_{\rQ\rR}$ means that $\ket{\Psi}$ is a bipartite state on $\mH_\rQ\otimes \mH_\rR$.
$\lin{\mH_\rQ,\mH_\rR}$ is the set of all linear maps $\mH_\rQ\to\mH_\rR$. $\pos{\mH}$ is the set of positive operators on $\mH$; $\density{\mH}$ the set of density matrices. Given $\mathcal{F},\mathcal{G}\in\lin{\mH,\mH}$ we let $\mathcal{F}\circ\mathcal{G}$ denote their composition. If there are $s$ such maps $\mathcal{F}_\ell$, we let $\bigcirc_{\ell=1}^{s}\mathcal{F}_\ell := \mathcal{F}_s\circ\cdots\circ \mathcal{F}_1$.

Given two registers $\rQ$ and $\rR$ associated to isomorphic Hilbert spaces $\mH_\rQ$, $\mH_\rR$ respectively we let $\SWAP_{\rQ\rR}$ be the unitary that swaps their contents: for any two orthonormal bases $\ket{u_i}$ for $\mH_\rQ$ and $\ket{v_j}$ for $\mH_\rR$, $\SWAP_{\rQ\rR}= \sum_{i,j} \ket{v_i , u_j}\bra{u_j, v_i}$. 

\paragraph{Complexity classes.}
We give relatively informal definitions of the quantum interactive proof classes considered in this paper. For formal definitions we refer the reader to the book~\cite{KitSheVya02} and the survey~\cite{Watrous2009quantum}.

$\QMA$ is the class of all promise problems $L=(L_{yes},L_{no})$ such that there exists a polynomial $p$ and a quantum polynomial-time verifier $V$ such that:
\begin{itemize}
\item (completeness) For every $x\in L_{yes}$, there exists a state $\ket{\Psi}$ on $p(|x|)$ qubits such that $V(x,\ket{\Psi})$ accepts with probability at least $2/3$,
\item(soundness) For every $x\in L_{no}$ and every $\ket{\Psi}$ on $p(|x|)$ qubits, $V(x,\ket{\Psi})$ accepts with probability at most $1/3$.
\end{itemize}
We further note that using an amplification technique of Marriott and Watrous~\cite{MarriottW05qma} one can show that for any fixed polynomial $q$ the completeness and soundness parameters can be replaced by $1-2^{-q(|x|)}$ and $2^{-q(|x|)}$ respectively without changing the definition of $\QMA$. Furthermore the amplification procedure in~\cite{MarriottW05qma} preserves the witness length, so that the polynomial $p$ does not need to grow if one increases $q$ (only the complexity of the verification procedure increases). We define the exponential-size version of $\QMA$, $\QMA_{\EXP}$, by allowing the witness to be on $2^{p(|x|)}$ qubits and the verifier to run in quantum exponential time.

$\MIP(r,t,c,s)$ is the class of all promise problems $L=(L_{yes},L_{no})$ such that there exists a polynomial $p$ and a classical polynomial-time verifier $V$, interacting with $r$ non-communicating provers through $t$ rounds of interaction in each of which at most $p(|x|)$ bits of communication are exchanged between the verifier and the provers, such that:
\begin{itemize}
\item (completeness) For every $x\in L_{yes}$, there exists a strategy for the provers that is accepted by the verifier with probability at least $c$,
\item(soundness) For every $x\in L_{no}$ any strategy of the provers is accepted by the verifier with probability at most $s$.
\end{itemize}
$\QMIP(r,t,c,s)$ is defined in the same way, except the verifier and communication exchanged are allowed to be quantum. $\MIP^*(r,t,c,s)$ (resp. $\QMIP^*(r,t,c,s)$) is defined as $\MIP(r,t,c,s)$ (resp. $\QMIP(r,t,c,s)$) but the provers are allowed to share an arbitrary entangled state as part of their strategy. (In this paper we only consider protocols for which the number of rounds of interaction is $t=1$.)

It follows from~\cite{BabForLun91CC,KobMat03JCSS} that, for any polynomials $p_1$, $p_2$ and $p_3$,
\begin{align*}
& \MIP(p_1,p_2,2/3,1/3) = \QMIP(p_1,p_2,2/3,1/3) \\
&= \MIP(2,1,1,2^{-p_3}) = \QMIP(2,1,1,2^{-p_3}) = \NEXP.
\end{align*}
In fact,~\cite{KobMat03JCSS} even show that the same equalities hold for $\QMIP^*$ when the provers are limited to a polynomial number of qubits of entanglement. 

\paragraph{The local Hamiltonian problem.}
Let $k$ be a fixed integer and $a,b:\N\to[0,1]$ such that $a(n)<b(n)$ for all integers $n$. The $k$-local Hamiltonian problem (LH) is defined as follows. The input is a classical description of a local Hamiltonian $H = \sum_{i=1}^m H_i \in\lin{\C^{d^n},\C^{d^n}}$ acting on $n$ qudits of dimension $d$ each. Here each $H_i$ is a positive semidefinite matrix of norm at most $1$ acting on at most $k$ out of the $n$ qudits, and can thus be represented by a matrix of dimension $d^k\times d^k$; when we write $H=\sum_i H_i$ we implicitly mean that each $H_i$ should be tensored with the identity acting on the remaining $(n-k)$ qudits. We label  the qudits from $1$ to $n$, and denote by $S_j$ the set of $k$ qudits on which $H_j$ acts. The problem is to determine which of the following two cases holds:
\begin{enumerate}
\item (YES) There exists a $n$-qudit state $\ket{\Gamma}$ such that $\bra{\Gamma}H\ket{\Gamma}\leq am$,
\item (NO) For all states $\ket{\Psi}$, $\bra{\Psi}H\ket{\Psi} \geq bm$. 
\end{enumerate}
Kempe, Kitaev and Regev showed the following:
\begin{theorem}[\cite{KempeKR06lh}]
For any fixed polynomial $q$, there is a polynomial $p$ such that the $k$-local Hamiltonian problem, where the number of qubits $n$ is specified in unary, is $\QMA$-complete for $k=2$, $d=2$, $a=2^{-q(n)}$ and $b=1/p(n)$. 
\end{theorem}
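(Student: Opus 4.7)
The plan is to combine three ingredients: amplification of the underlying $\QMA$ verifier, Kitaev's circuit-to-Hamiltonian construction, and perturbation gadgets reducing locality from $5$ down to $2$.

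First I would invoke the Marriott--Watrous amplification mentioned in the preliminaries: given any $\QMA$ verifier for a language $L$, and the target polynomial $q$, produce an amplified verifier $V$ with completeness $1-2^{-r(n)}$ and soundness $2^{-r(n)}$ for a sufficiently large polynomial $r$ (to be fixed at the end), and with witness length still some polynomial in $n$. Apply Kitaev's standard ``history state'' construction to $V$ on input $x$. This yields a $5$-local Hamiltonian $H^{(5)} = H_{in} + H_{out} + H_{prop} + H_{clock}$ acting on $\poly(|x|)$ qubits whose smallest eigenvalue is at most $2^{-r(n)}/m$ in the YES case (achieved by the history state of the optimal witness) and at least $\Omega(1/\poly(n))$ in the NO case (from the standard analysis that relates the ground energy of $H^{(5)}$ to the maximum acceptance probability through an inverse-polynomial factor coming from the length of the clock register and the spectral gap of the propagation Hamiltonian). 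Choosing $r$ much larger than $q$ ensures that, after accounting for the $m/\poly(n)$ loss, the YES-case upper bound is still $2^{-q(n)}$ while the NO-case lower bound remains $1/\poly(n)$.

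Next I would apply the perturbation-gadget machinery from Kempe--Kitaev--Regev to replace each $5$-local term by a sum of $2$-local terms acting on $\poly(n)$ qubits. The gadgets are designed so that in the low-energy subspace of a large ``penalty'' Hamiltonian, the effective Hamiltonian approximates $H^{(5)}$ to inverse-polynomial (indeed inverse superpolynomial, if one sets the penalty strength to a sufficiently large polynomial) additive error on both the ground energy and the spectral gap. Composing a constant number of gadget rounds suffices to go from $5$-local to $3$-local and then from $3$-local to $2$-local. After normalization so that each new term has norm at most $1$, one obtains a new $2$-local Hamiltonian $H^{(2)}$ with $m' = \poly(n)$ terms such that, up to rescaling, the YES bound is $a = 2^{-q(n)}$ and the NO bound is $b = 1/p(n)$ for a polynomial $p$ that depends on $q$ and the gadget parameters.

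Finally I would verify that all reductions are computable in $\poly(|x|)$ time (Kitaev's construction and the perturbation gadgets are all explicit and local, so this is routine) and that hardness lifts: YES instances of $L$ map to YES instances of LH by exhibiting the history state of the optimal $\QMA$ witness as $\ket{\Gamma}$, and NO instances map to NO instances by the soundness analysis of the Kitaev Hamiltonian and the spectral-gap preservation of the gadgets. Membership of LH in $\QMA$ is standard (the verifier picks a random term, receives its reduced density matrix, and measures $H_i$), so $\QMA$-completeness follows. I expect the main technical obstacle to be quantitative: carefully propagating the $2^{-q(n)}$ completeness parameter through the perturbation-gadget reduction without it being degraded to only inverse-polynomial, which is why the amplification parameter $r$ must be chosen after fixing the gadget's additive error budget. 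The rest of the argument is bookkeeping of the various polynomial losses.
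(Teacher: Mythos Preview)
The paper does not prove this theorem; it is stated as a citation to Kempe--Kitaev--Regev and no argument is given. So there is no ``paper's proof'' to compare against, and your sketch is the natural route with the right ingredients (Marriott--Watrous amplification, Kitaev's history-state Hamiltonian, then the KKR locality reduction).

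The concern you flag at the end is not bookkeeping but the only real issue, and as written your plan does not yet resolve it. The perturbation-gadget (or projection-lemma) analysis controls the low-energy spectrum of $H^{(2)}$ only to additive error $O(\|V\|^2/J)$ in the penalty strength $J$; since $J$ must stay polynomial for the terms to have bounded norm, this additive error is at best inverse polynomial and would swamp a target of $a = 2^{-q(n)}$. The fix is to treat the YES and NO directions asymmetrically. For the NO direction an inverse-polynomial lower bound is all that is claimed, and the perturbative estimate suffices. For the YES direction one should not go through the spectral perturbation bound at all: instead, explicitly exhibit a low-energy state of $H^{(2)}$ by tensoring the history state with the ground states of the gadget mediator qubits, and compute its energy directly. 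That variational upper bound inherits the $2^{-r(n)}$ factor from the amplified verifier with only polynomial multiplicative loss, so choosing $r$ a sufficiently large polynomial (after fixing the gadget parameters) recovers $a = 2^{-q(n)}$. Once you make this asymmetry explicit, your outline goes through.
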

For the case of $\QMA_{\EXP}$ essentially the same construction yields the following (see also~\cite{GottesmanI09tiling}): 
\begin{theorem}[\cite{KempeKR06lh}]
For any fixed polynomial $q$, there is a polynomial $p$ such that the $k$-local Hamiltonian problem, where the number of qubits $N$ is specified in binary (hence can be exponential in the input size), is $\QMA_{\EXP}$-complete for $k=2$, $d=2$, $a=2^{-q(N)}$ and $b=1/p(N)$. 
\end{theorem}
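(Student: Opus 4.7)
The plan is to obtain this scaled-up version by carefully redoing the argument that establishes the first theorem, keeping track of the fact that all reductions are uniform and go through even when the underlying verifier circuit is of exponential size. There are two inclusions to show: containment in $\QMA_\EXP$ and $\QMA_\EXP$-hardness.

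For containment, I would describe the natural $\QMA_\EXP$ verifier for LH on $N$ qubits (with $N$ given in binary): the prover sends a $2^N$-dimensional witness, and the verifier samples a term index $j\in[m]$, reads off the succinct specification of $H_j$ together with the set $S_j$, and performs the projective measurement $\{H_j,I-H_j\}$ on the $k=2$ relevant qubits of the witness. With $m = \poly(N) = 2^{\poly(|x|)}$ terms the verifier runs in exponential time, and standard amplification brings the completeness/soundness gap to any desired inverse-exponential (in $|x|$) value, subsuming the target parameters $a=2^{-q(N)}$ and $b=1/p(N)$.

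For hardness, I would take an arbitrary $L\in\QMA_\EXP$ with verifier circuit $V_x$ of size $T(|x|)=2^{\poly(|x|)}$ acting on a witness of size $M(|x|)=2^{\poly(|x|)}$, first applying Marriott--Watrous amplification to boost completeness and soundness to $1-\eps$ and $\eps$ for $\eps=2^{-R(|x|)}$ where $R$ is a sufficiently large polynomial (witness length is preserved; circuit size stays $2^{\poly(|x|)}$). I would then replay Kitaev's clock-construction: build a $5$-local Hamiltonian $H_V$ on $N=O(T+M)=2^{\poly(|x|)}$ qubits whose low-energy states encode the history $\sum_t \ket{t}\otimes V_t\cdots V_1\ket{\Psi,0}$ of the computation. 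Crucially, the reduction I would emit is \emph{a succinct description} of $H_V$ (a polynomial-size classical circuit that, on input $j\in[m]$ written in binary, outputs the subset $S_j\subset[N]$ and the $4\times 4$ matrix entries of $H_j$), which is computable in time $\poly(|x|)$ from the uniform description of $V_x$. Applying the Kempe--Kitaev--Regev perturbative gadgets to $H_V$ then reduces locality from $5$ to $k=2$ while degrading the YES/NO gap by only an inverse polynomial in $N$, giving the target parameters $a=2^{-q(N)}$ and $b=1/p(N)$ once $R$ is chosen large enough relative to $q$ and the gadget losses.

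The main thing to be careful about is \emph{uniformity}: the verifier must produce the succinct description of $H$ in time polynomial in $|x|$, even though $H$ itself has $2^{\poly(|x|)}$ qubits and $2^{\poly(|x|)}$ terms. This is exactly what Kitaev's construction delivers, because each term of the history Hamiltonian is determined locally by one gate of $V_x$ together with one value of the clock register, both of which are computable from their binary indices using the uniform description of $V_x$ and a polynomial-size clock circuit; the KKR gadget reduction is itself local and preserves this succinctness. Once uniformity is checked, the soundness/completeness analysis from~\cite{KempeKR06lh} goes through verbatim, with the polynomial $p$ depending only on $q$ (and on the fixed gadget parameters), as claimed.
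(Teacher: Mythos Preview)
The paper does not actually supply a proof of this statement: it is cited from~\cite{KempeKR06lh}, and the only justification given is the single sentence ``For the case of $\QMA_{\EXP}$ essentially the same construction yields the following (see also~\cite{GottesmanI09tiling}).'' Your proposal is precisely an unpacking of that sentence---rerunning the Kitaev history-state construction and the Kempe--Kitaev--Regev locality reduction while tracking that everything remains uniform/succinct when the circuit and witness are of exponential size---and is correct in outline.

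Since the paper offers no detailed argument to compare against, the only point worth flagging is that you have correctly identified the one non-automatic ingredient: the Hamiltonian now has exponentially many terms on exponentially many qubits, so the instance must be given by a polynomial-size circuit that outputs $(S_j,H_j)$ on input $j$, and the reduction must produce this circuit in polynomial time. This is exactly what the reference to~\cite{GottesmanI09tiling} is meant to point to. Two small cosmetic remarks: in your containment paragraph, the witness is an $N$-qubit state (living in a $2^N$-dimensional space), not a ``$2^N$-dimensional witness''; and the $\QMA_\EXP$ verifier runs in time $\poly(N)=2^{\poly(|x|)}$, which is the definition of exponential time in this context, so there is no tension there.
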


\paragraph{Error-correcting codes.}
Our protocol relies on the use of a quantum error-correcting code $C$ that has the following properties:
\begin{itemize}
\item $C$ encodes $1$ logical qubit into $r$ physical qubits.
\item $C$ detects and corrects all single-qubit Pauli errors on a single qubit.
\item The reduced density matrix of any codewords in $C$ on a single qubit is the totally mixed state $\Id/2$.
\end{itemize}
An example of a code satisfying all three conditions for $r=5$ and $e=1$ is the $5$-qubit stabilizer code~\cite{bennett1996mixed,laflamme1996perfect}. 
Given $r$ single-qubit registers $\rR_1,\ldots,\rR_r$ we let $\DEC_{\rR_1\cdots\rR_r}:\density{(\C^2)^{\otimes r}} \to\density{ \C^2}$ be the completely positive trace-preserving (CPTP) map corresponding to the decoding operation. We also let $\CHECK_{\rR_1\cdots\rR_r} \in \pos{(\C^2)^{\otimes r}}$ be the projection onto the code space.

\section{Proof of Theorem~\ref{thm:main}}

\begin{figure}
\begin{protocol*}{Protocol~$P$}
\begin{step}
\item[]  Let $H=\sum_{i=1}^m H_i$ be an instance of the $k$-local Hamiltonian problem given as input, and $n$ the number of qubits on which $H$ acts. Let $C$ be an error-correcting code which encodes $1$ logical qubit into $r$ physical qubits and satisfies the three conditions described at the end of Section~\ref{sec:prelim}.  

The verifier performs each of the following tests with probability $1/2$ each:
\begin{step}
\item\label{step:test1} Select a $j\in[m]$ uniformly at random, and let $S_j\subseteq[n]$ be the set of $k$ qubits on which the local term $H_j$ acts. Ask the provers for their respective share of all qubits in $S_j$. Upon receiving the shares, apply the decoding map independently to each of the $k$ groups of $r$ shares and measure the resulting state using $\{H_j,\Id-H_j\}$. Reject if the outcome is `$H_j$'. 
\item\label{step:test2} Select a qubit $i\in [n]$ uniformly at random, and a set $S\subseteq [n]$ uniformly at random among all sets of size $k$ that contain $i$. With probability 1/2, ask one of the provers at random for his share of all qubits in $S$, and the remaining $r-1$ provers for their respective share of the $i$-th qubit only. With probability 1/2, ask all provers for their respective share of the $i$-th qubit. In both cases, verify that all provers' shares of the $i$-th qubit together lie in the codespace. Reject if not. 
\end{step}
\end{step}
\end{protocol*}
\caption{Protocol for the verification of an instance of the local Hamiltonian problem.}
\label{fig:protocol}
\end{figure}

In this section we prove Theorem~\ref{thm:main}. The protocol is described in Figure~\ref{fig:protocol}. The first two properties claimed in the theorem, on the structure of the protocol, are clear: there is a single round of interaction, and using the $5$-qubit stabilizer code for $C$ the protocol can be executed with $r=5$ provers. Messages from the verifier to the provers are either the label of a qubit or the description of a set of size $k$, which require $O(\log n)$ bits to specify. Messages from any prover to the verifier are either $1$ or $k$ qubits. In Section~\ref{sec:completeness} we establish the completeness property of the protocol; soundness is proved in Section~\ref{sec:soundness}. 

\subsection{Completeness analysis}\label{sec:completeness}

\begin{lemma}
Suppose that there exists a state $\ket{\Gamma}$ such that $\bra{\Gamma}H\ket{\Gamma}\leq am$. Then there exists a strategy for the provers in Protocol~$P$ that is accepted with probability at least $1-a/2$.  
\end{lemma}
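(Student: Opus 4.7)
The plan is to exhibit the natural ``encode-and-split'' strategy and verify it passes both tests. Let $\ket{\Gamma}$ be the assumed low-energy state on $n$ logical qubits, and let $E$ be the encoding isometry of the code $C$, mapping one logical qubit to $r$ physical qubits. The honest provers jointly prepare the state $\ket{\tilde{\Gamma}} = E^{\otimes n}\ket{\Gamma}$, which lives on $nr$ physical qubits naturally partitioned into $n$ blocks of $r$ shares. For each $\ell \in [r]$, prover $\ell$ holds the $\ell$-th share of every block. On receiving any query (a single qubit index $i$, or a $k$-element subset $S\subseteq[n]$), each prover just forwards the requested share(s) without further processing.

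First I would handle Test~(a). Conditioned on this test being chosen and on a term $H_j$ being selected, the verifier collects all $r$ shares of every qubit in $S_j$. By construction these $kr$ physical qubits constitute the full encoding of the $S_j$-registers of $\ket{\Gamma}$ (entangled with the complementary logical qubits still held by the provers). Applying $\DEC$ to each of the $k$ blocks recovers the exact $k$-qubit reduced state of $\ket{\Gamma}$ on $S_j$, so measuring $\{H_j,\Id-H_j\}$ yields outcome ``$H_j$'' with probability $\bra{\Gamma}H_j\ket{\Gamma}$. Averaging over the uniform choice of $j\in[m]$, the conditional rejection probability under Test~(a) equals $\frac{1}{m}\bra{\Gamma}H\ket{\Gamma}\leq a$.

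Test~(b) is immediate. In either sub-case the verifier receives, from all $r$ provers, precisely the $r$ shares corresponding to the $i$-th logical qubit of $\ket{\tilde{\Gamma}}$ (in the first sub-case, the prover who was asked for $S\ni i$ contributes its share of $i$ as part of its response). Since $\ket{\tilde{\Gamma}}$ was obtained by applying $E$ to each logical qubit, these $r$ shares jointly lie in the image of $E$, which is exactly the codespace. Hence $\CHECK$ accepts with probability $1$, giving conditional rejection probability $0$ under Test~(b).

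Combining the two tests, each applied with probability $1/2$, the overall rejection probability is at most $\tfrac{1}{2}\cdot a + \tfrac{1}{2}\cdot 0 = a/2$, giving acceptance probability at least $1-a/2$ as claimed. I do not anticipate any obstacle: the argument uses only the existence of the decoding map $\DEC$ and the codespace projector $\CHECK$, together with the fact that $\CHECK\circ E = E$, all of which are guaranteed by the properties of the $5$-qubit stabilizer code recalled at the end of Section~\ref{sec:prelim}.
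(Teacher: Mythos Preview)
Your argument is correct and follows essentially the same approach as the paper: encode $\ket{\Gamma}$ qubit-by-qubit with $C$, distribute the shares, answer honestly, and observe that test~(b) accepts with probability $1$ while test~(a) rejects with probability $\frac{1}{m}\sum_j \bra{\Gamma}H_j\ket{\Gamma}\le a$. The paper's proof is slightly terser but the content is identical.
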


\begin{proof}
We describe a strategy for the provers. Let $\ket{\Gamma}$ be such that $\bra{\Gamma}H\ket{\Gamma}\leq am$. Before the protocols start, the provers generate a shared entangled state $\ket{\Psi}$ over $rn$ qubits by independently encoding each qubit of $\ket{\Gamma}$ into $r$ qubits using the code $C$ prescribed by the protocol. Each of the $r$ provers keeps $n$ qubits of $\ket{\Psi}$, corresponding to a share of each of the encoded qubits of $\ket{\Gamma}$. When asked for its share of any set of qubits, the prover complies and sends it to the verifier. It is clear that this strategy is accepted with probability $1$ in item~\ref{step:test2}, and with probability 
$$\frac{1}{m}\sum_{i=1}^m \,\bra{\Gamma}(\Id-H_i)\ket{\Gamma}\,\geq\,1-a$$
 in item~\ref{step:test1}. Using that each test is performed with probability $1/2$, the overall success probability for the strategy is at least $1-a/2$. 
\end{proof}

\subsection{Soundness analysis}\label{sec:soundness}

In this section we analyze the soundness of protocol~$P$. In section~\ref{sec:def-strategy} we introduce the notation used to describe the most general strategy that the provers may employ in the protocol. In section~\ref{sec:strat-analysis} we show that, provided that all eigenvalues of $H$ are larger than some inverse polynomial, any strategy for the provers is rejected by the verifier with inverse polynomial probability. 

\subsubsection{The provers' strategies}\label{sec:def-strategy}

We denote an arbitrary strategy for the $r$ provers in protocol~$P$ via  a triplet $(U_i^j,V_S^j,\ket{\Psi})$ (or sometimes $(U_i^j,V_S^j,\rho)$). Here $\ket{\Psi}$ (or $\rho$) denotes the initial $r$-partite entangled state shared by the provers, and $U_i,V_S$ the unitaries that they apply upon receiving questions $i,S$ respectively. More precisely, in the protocol a prover is asked two types of questions. Either it is asked for a single qubit $i$, in which case we call the unitary $U_i^t$ (where $t$ indexes the prover), or it is asked  for a set of $k$ qubits $S$, in which case we call the unitary $V_S^t$. We sometimes omit the superscript $t$, as the labeling of the provers will often be clear from context.  We denote the associated completely positive trace-preserving (CPTP) maps by $\mathcal{U}_i^t:\sigma\mapsto U_i^t \sigma (U_i^t)^\dagger$ and $\mathcal{V}_{S}^t:\sigma\mapsto V_{S}^t \sigma (V_{S}^t)^\dagger$.

For $t\in [r]$ we write $\rP^t$ for the register containing the $t$-th prover's share of $\ket{\Psi}$. After application of the unitary $U_i^t$ or $V_S^t$, we relabel registers associated to the prover as $\rS^t,\rQ_1^t,\ldots,\rQ_n^t$. Here the $n$ registers $\rQ_1^t,\ldots,\rQ_n^t$ are each single-qubit registers such that register $\rQ_i^t$ (resp. registers $\rQ_{i_1}^t\cdots \rQ_{i_k}^t$) is sent back to the verifier when the prover is asked for qubit $i$ (resp. set of qubits $S=\{i_1,\ldots,i_k\}$). Note that all registers $\rQ_i^t$ may not exist simultaneously; which ones do depends on the unitary $U_i^t$ or $V_S^t$ that was applied. The remaining register $\rS^t$ is an auxiliary register of arbitrary dimension. In addition, for each prover $t\in\{1,\ldots,r\}$ we introduce $2n$ auxiliary registers $\rR_1^t,\ldots,\rR_n^t$ and $\overline{\rR}_1^t,\ldots,\overline{\rR}_n^t$, and define
\begin{equation}\label{eq:defpsi}
\ket{\tilde{\Psi}} := \ket{\Psi} \bigotimes_{t=1}^r \bigotimes_{i=1}^n \frac{1}{\sqrt{2}}\big(\ket{00}_{\rR_i^t\overline{\rR}_i^t}+\ket{11}_{\rR_i^t\overline{\rR}_i^t}\big),
\end{equation}
i.e. $\ket{\tilde{\Psi}}$ is $\ket{\Psi}$ adjoined with $n$ EPR pairs for each prover, created in the auxiliary registers. We write $\rho = \ket{\Psi}\bra{\Psi}$ and $\tilde{\rho}=\ket{\tilde{\Psi}}\bra{\tilde{\Psi}}$. See Figure~\ref{fig:registers} for a summary of our nomenclature for registers. We will often abbreviate $\rQ_i$ for the union of the $\rQ_i^j$, $j\in [r]$, and write $\rQ_i^{\neq t}$ for the union of all $\rQ_i^j$ for $j\in[r]\backslash\{t\}$. 

\begin{figure}
\begin{center}
\begin{tabular}{r|c|l}
&Register &  Use  \\ \hline
Before application &  $\rP^t$ & Prover $t$'s register in state $\ket{\Psi}$ \\
 of $U_i$, $V_{S}$. & & \\ \hline
After application &   $\rQ_i^t$ & Register sent back by prover $t$ if asked for the $i$-th qubit.\\ 
 of $U_i$, $V_{S}$ &$\rS^t$ & Prover $t$'s remaining private registers.\\\hline
Auxiliary & $\rR_i^t$, $\overline{\rR_i^t}$ & Initialized as an EPR pair. \\
registers & &
\end{tabular}
\end{center}
\caption{Notation for the provers' registers.}
\label{fig:registers}
\end{figure}

We introduce a new set of unitaries which act on a prover's share of $\ket{\tilde{\Psi}}$ as
\begin{equation}\label{eq:def-cd}
C^t_i\,:=\, (U^t_i)^\dagger (\SWAP_{\rQ_i^t\rR_i^t}\otimes \Id) U_i^t\qquad\text{and}\qquad D_{i,S}^t\,:=\, (V_S^t)^\dagger (\SWAP_{\rQ_i^t\rR_i^t} \otimes \Id) V_S^t,
\end{equation}
where $U_i^t$ and $V_S^t$ are implicitly tensored with the identity on the auxiliary registers. We denote the associated CPTP maps by $\cC_i^t:\sigma\mapsto C_i^t \sigma (C_i^t)^\dagger$ and $\cD_{i,S}^t:\sigma\mapsto D_{i,S}^t \sigma (D_{i,S}^t)^\dagger$. In order not to overload the notation we often do not specify precisely on which registers the identity acts (sometimes we even omit the symbol $\Id$ altogether), as it should always be clear from context.
In words, $C_i^t$ corresponds to applying $U_i^t$, swapping the register $\rQ_i^t$ containing the output qubit with the $i$-th ancilla register $\rR_i^t$, and applying $(U_i^t)^\dagger$. For $i\in S$, $D_{i,S}^t$ is defined as $C_i^t$ but from the unitary $V_S^t$ instead of $U_i^t$, while still swapping the output qubit in register $\rQ_i^t$ only (and not the others). For any subset $T\subseteq S$ we define $D_{T,S}^t$ in the same ways as $D_{i,S}^t$ except all qubits in the subset $T$ are swapped out; in particular $D_{\{i\},S}^t = D_{i,S}^t$ and $D_{\emptyset,S}=\Id$. The following observation, which follows from $V_S^t(V_S^t)^\dagger = \Id$, will be useful:
\begin{equation}\label{eq:grow-T}
\forall T\subset S,\,\forall i\in S\backslash T,\qquad  D^t_{i,S} D^t_{T,S} \,=\, D^t_{T,S} D^t_{i,S}\,=\,D^t_{T\cup\{i\},S}.
\end{equation}
Since $\SWAP = \SWAP^\dagger$ it also holds that $(C_i^t)^\dagger = C_i^t$ and $(D_{T,S}^t)^\dagger = D_{T,S}^t$. 

Finally, we define an $n$-qubit mixed state 
\begin{equation}\label{eq:def-witness}
\sigma := \Big(\bigotimes_{i=1}^n \DEC_{\rR_i^1\cdots\rR_i^r} \Big)\Big( \Tr_{\cup_t((\cup_i\overline{\rR_i^t}\rQ_i^t) \rS^t) }\Big(\Big(\bigotimes_{t=1}^{r} C_n^t\cdots  C_2^t  C_1^t \Big)\ket{\tilde{\Psi}}\bra{\tilde{\Psi}} \Big(\bigotimes_{t=1}^{r} (C_1^t)^\dagger\cdots  (C_n^t)^\dagger\Big)\Big)\Big),
\end{equation}
i.e. $\sigma$ is the state obtained by, first applying unitaries $C_1^t,\ldots,C_n^t$, for $t=1,\ldots,r$, to the original state $\ket{\Psi}$ and the auxiliary registers (initialized as EPR pairs), then tracing out all but the $nr$ auxiliary registers $\rR_1^t,\ldots,\rR_n^t$ for $t=1,\ldots,r$, and finally applying the decoding map for code $C$ independently to each group of $r$ auxiliary registers $\rR_i^1\cdots \rR_i^r$.

\subsubsection{Analysis of the strategy}\label{sec:strat-analysis}

In this section we prove the following lemma, which establishes soundness of protocol $P$.  

\begin{lemma}\label{lem:soundness}
There exists a universal constant $c_3>0$ (depending on $k$ only) such that the following holds. Suppose a strategy for the provers is accepted with probability at least $1-\eps$ in each of the tests of protocol~$P$, for some $\eps>0$. Then the state $\sigma$ defined in~\eqref{eq:def-witness} satisfies $\frac{1}{m}\Tr(H\sigma) =O(n^{c_3}\eps)$. 
\end{lemma}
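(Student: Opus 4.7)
The strategy is to show that the mixed state $\sigma$ defined in~\eqref{eq:def-witness}---a ``reconstructed'' proof built from the provers' unitaries---has low expected energy with respect to $H$, by comparing, for each $j\in[m]$, the reduced state $\sigma|_{S_j}$ to the state $\tau_j$ actually produced and measured in test~\ref{step:test1} for the local term $H_j$. Since $\Tr(H_j\sigma)=\Tr(H_j\sigma|_{S_j})$ and the high success probability on test~\ref{step:test1} gives $\frac{1}{m}\sum_j \Tr(H_j\tau_j) \leq \eps$, the bound $\frac{1}{m}\Tr(H\sigma) = O(n^{c_3}\eps)$ follows once we establish $\sigma|_{S_j}\approx\tau_j$ in trace distance with a polynomial-in-$n$ loss, averaged over $j$.

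\textbf{Step 1: local codespace consistency from test~\ref{step:test2}.} The swap-trick definitions~\eqref{eq:def-cd} of $C_i^t$ and $D_{i,S}^t$ are precisely engineered so that the success of the two sub-variants of test~\ref{step:test2} translates into codespace conditions on the ancillas. Specifically, the ``all-$U$'' sub-variant accepts (averaged over $i\in[n]$) with probability equal to the overlap of $\big(\bigotimes_t \cC_i^t\big)(\tilde\rho)$ with $\CHECK_{\rR_i^1\cdots\rR_i^r}$, and analogously for the ``mixed'' sub-variant with $\cC_i^{t^*}$ replaced by $\cD_{i,S}^{t^*}$ for a random prover $t^*$. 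By Markov-type averaging, these global bounds convert into per-$i$ and per-$(i,S,t^*)$ codespace bounds of the form $1-O(n^{O(1)}\eps)$.

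\textbf{Step 2: substitution, collapse, and identification with $\tau_j$.} The single-error-correction property of the code guarantees that two $r$-qubit states that both lie $\delta$-close to the codespace and agree on $r-1$ of their physical shares decode via $\DEC$ to logical qubits at trace distance $O(\sqrt{\delta})$ apart. Combined with Step~1, this lets us substitute $\cD_{i,S_j}^t$ for $\cC_i^t$ \emph{one prover at a time} for every $i\in S_j$, incurring only $O(n^{O(1)}\eps)$ trace-distance cost on $\sigma|_{S_j}$ per step. Once every $\cC_i^t$ with $i\in S_j$ has been replaced, relation~\eqref{eq:grow-T} compresses the successive $\cD_{i,S_j}^t$'s within each prover into a single $\cD_{S_j,S_j}^t = (V_{S_j}^t)^\dagger \big(\prod_{i\in S_j} \SWAP_{\rQ_i^t \rR_i^t}\big) V_{S_j}^t$, which by definition places on $\rR_i^t$ for $i\in S_j$ precisely the encoded answer prover $t$ would return to query $S_j$ in test~\ref{step:test1}. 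Since the remaining $\cC_i^t$ with $i\notin S_j$ act on registers disjoint from $\rR_{i'}^t$ for $i'\in S_j$, they do not disturb the partial trace onto these registers, yielding $\tau_j$ after decoding. Averaging over $j$ then gives the claimed bound on $\frac{1}{m}\Tr(H\sigma)$.

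\textbf{Main obstacle.} The hybrid in Step~2 is the technical heart. Test~\ref{step:test2} only certifies a codespace property when at most one prover deviates from the $U$-protocol; to iterate through substitutions for all $r$ provers at a given $i\in S_j$, one must argue inductively that each intermediate hybrid state remains close to the codespace, using the code's error-correction to ``absorb'' the previously substituted provers' perturbations. Separately, within a single prover, the unitaries $\cC_1^t,\ldots,\cC_n^t$ generally do not commute---they share the main register $\rP^t$---so the move required to apply~\eqref{eq:grow-T}, namely grouping all substituted $\cD_{i,S_j}^t$'s together and past the $\cC_{i'}^t$ with $i'\notin S_j$, must be justified by a careful analysis of reduced states rather than direct unitary algebra. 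Each maneuver accumulates $O(n^{O(1)}\eps)$ error from Step~1, and propagating them through the depth-$n$ ``reconstruction circuit'' $\bigcirc_{i=1}^n \cC_i^t$ implicit in the definition of $\sigma$ is what yields the final polynomial soundness loss $n^{c_3}$.
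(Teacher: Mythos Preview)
Your plan is correct and tracks the paper's proof closely: the paper proves exactly the comparison $\sigma|_{S_j}\approx\tau_j$ via the substitution $\cC_i^t\to\cD_{i,S}^t$ justified by the code's error-detection (Claim~\ref{claim:cidj}), then propagates this through the depth-$n$ circuit via an inductive ``collapse'' argument (Claims~\ref{claim:check},~\ref{claim:collapse}, Corollary~\ref{cor:cidj-s}) culminating in a double induction (Claim~\ref{claim:low-energy}) that compresses to $\cD_{S_j,S_j}^t$ using~\eqref{eq:grow-T}. Two mechanisms you gesture at but should make explicit when filling in details: (i) the ``commuting $\cD_{i_{s+1},S}$ past $\cC_{\ell}$'' step is not done via reduced-state analysis but by temporarily replacing $\cC_\ell$ with $\cD_{\ell,S'}$ for an auxiliary $S'\ni\{\ell,i_{s+1}\}$, using the exact commutation $\cD_{i_{s+1},S'}\circ\cD_{\ell,S'}=\cD_{\ell,S'}\circ\cD_{i_{s+1},S'}$, and swapping back; (ii) the inductive preservation of the codespace condition (your ``intermediate hybrid remains close to the codespace'') relies on the third code property---totally mixed single-qubit marginals---which makes the swap $\rQ_i^t\leftrightarrow\rR_i^t$ invisible after projecting onto the codespace and tracing out the other shares.
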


The proof of the lemma follows from a sequence of claims. The first draws a useful consequence of the condition that the provers succeed in test~\ref{step:test2} with high probability. 

\begin{claim}\label{claim:cidj}
Suppose the strategy $(U_i^j,V_{S}^j,\ket{\Psi})$ succeeds in test~\ref{step:test2} with probability at least $1-\eps$. For any $t\in [r]$, $i\in [n]$ and $S\subseteq [n]$ of cardinality $k$ such that $i\in S$,
\begin{equation}\label{eq:cidj1}
\big\| (C_i^t - D_{i,S}^t)\otimes \Id \ket{\tilde{\Psi}}\big\|^2 \,=\, O\big(n^k\eps\big),
\end{equation}
where $\ket{\tilde{\Psi}}$ is defined from $\ket{\Psi}$ in~\eqref{eq:defpsi}. Furthermore, for any set $S'\subseteq [n]$ of cardinality $k$ and $T\subseteq S\cap S'$, 
\begin{equation}\label{eq:cidj2}
 \big\| (D_{T,S}^t - D_{T,S'}^t)\otimes \Id \ket{\tilde{\Psi}}\big\|^2 \,=\, O\big(n^k\eps\big).
\end{equation}
\end{claim}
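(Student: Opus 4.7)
The plan is to derive both parts of the claim from the soundness of the codespace checks in step~\ref{step:test2}, together with the single-qubit max-mixed property of the code $C$. The first step is bookkeeping: since step~\ref{step:test2} succeeds with probability at least $1-\eps$ and its two sub-tests (chosen with probability $1/2$ each) both use the same projector $\Pi_i := \CHECK_{\rQ_i^1\cdots\rQ_i^r}$ under two different configurations of prover queries, each sub-test is passed on average with probability at least $1-2\eps$. This yields (expectations uniform over $i\in[n]$, size-$k$ sets $S\ni i$, and $t\in[r]$)
\[ \Es{i}\big\|(\Id-\Pi_i)\textstyle\prod_{t'} U^{t'}_i \ket{\tilde{\Psi}}\big\|^2 = O(\eps),\qquad \Es{i,S,t}\big\|(\Id-\Pi_i) V^t_S \textstyle\prod_{t'\neq t}U^{t'}_i \ket{\tilde{\Psi}}\big\|^2 = O(\eps). \]
Specializing to a fixed triple $(i,S,t)$ costs a factor $n\cdot\binom{n-1}{k-1}\cdot r = O(n^k)$, so both $\ket{\psi_U} := \bigl(\prod_{t'}U^{t'}_i\bigr)\ket{\tilde{\Psi}}$ and $\ket{\psi_V} := \bigl(V^t_S\prod_{t'\neq t}U^{t'}_i\bigr)\ket{\tilde{\Psi}}$ lie within Euclidean distance $O(\sqrt{n^k\eps})$ of the codespace on $\rQ_i^1\cdots\rQ_i^r$.

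The key structural identity exploits the code's reduced-density property: every codeword can be written $\ket{\bar\psi} = \tfrac{1}{\sqrt{2}}(\ket{0}_{\rQ_i^t}\ket{\eta_0}+\ket{1}_{\rQ_i^t}\ket{\eta_1})$ with $\langle\eta_0|\eta_1\rangle=0$, from which a direct calculation yields
\[ \SWAP_{\rQ_i^t\rR_i^t}\bigl(\ket{\bar\psi}_{\rQ_i^1\cdots\rQ_i^r}\otimes\ket{\Phi^+}_{\rR_i^t\overline{\rR}_i^t}\bigr)\,=\,\ket{\Phi^+}_{\rQ_i^t\overline{\rR}_i^t}\otimes\ket{\bar\psi'}_{\rR_i^t\rQ_i^{\neq t}}, \]
where $\ket{\Phi^+}=\tfrac{1}{\sqrt{2}}(\ket{00}+\ket{11})$ and $\ket{\bar\psi'}$ is the same codeword with $\rR_i^t$ in place of the $t$-th qubit. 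Extended linearly, this says that whenever $\rQ_i^1\cdots\rQ_i^r$ is in the codespace, $\SWAP_{\rQ_i^t\rR_i^t}$ decouples $\rQ_i^t\overline{\rR}_i^t$ as an independent EPR pair while reinstalling the codeword on $\rR_i^t\rQ_i^{\neq t}$. Crucially, $\ket{\psi_U}$ and $\ket{\psi_V}$ differ only by a unitary on prover $t$'s working space, so they have identical reduced density on every other register --- in particular on $\rQ_i^{\neq t}$ and on $\rR_i^t\overline{\rR}_i^t$. Combined with the factoring identity and the codespace approximation, $\SWAP_{\rQ_i^t\rR_i^t}\ket{\psi_U}$ and $\SWAP_{\rQ_i^t\rR_i^t}\ket{\psi_V}$ agree, to within $O(\sqrt{n^k\eps})$, as purifications over prover $t$'s working space of the same reduced state on all non-prover-$t$ registers. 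Applying $(U^t_i)^\dagger$ and $(V^t_S)^\dagger$ --- which act purely on prover $t$'s working space --- to undo the respective unitaries produces states differing at most by an isometry on prover $t$'s private register, absorbed by an Uhlmann-type purification argument. Since $\bigl(\prod_{t'\neq t}U^{t'}_i\bigr)$ commutes with both $C^t_i$ and $D^t_{i,S}$ on their disjoint registers, this transfers to~\eqref{eq:cidj1}.

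For~\eqref{eq:cidj2}, relation~\eqref{eq:grow-T} gives $D^t_{T,S}=\bigcirc_{i\in T}D^t_{i,S}$ and $D^t_{T,S'}=\bigcirc_{i\in T}D^t_{i,S'}$, so it suffices to compare $D^t_{i,S}\ket{\tilde{\Psi}}$ with $D^t_{i,S'}\ket{\tilde{\Psi}}$ for each $i\in T$ separately, by the analogous argument with $V^t_S$ and $V^t_{S'}$ playing symmetric roles (both controlled by the ``$V$-side'' sub-test from the first paragraph). A telescoping triangle inequality over $|T|\leq k$ qubits preserves the $O(n^k\eps)$ bound.

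The main obstacle is the purification-synchronization step in the second paragraph: $U^t_i$ and $V^t_S$ split prover $t$'s working space differently (as $\rS^t\otimes\rQ_i^t$ versus a $k$-qubit output register tensored with the residual private space), so the purifications of the common non-prover-$t$ reduced state live in nominally different tensor decompositions. Making this rigorous reduces to writing $V^t_S = W U^t_i$ for a unitary $W$ on prover $t$'s working space (via Uhlmann) and using the codespace condition to show that $W$ approximately acts trivially on the $\rQ_i^t$ sub-register, so that it approximately commutes with $\SWAP_{\rQ_i^t\rR_i^t}$; the error from this commutation is governed by the $O(\sqrt{n^k\eps})$ codespace bound obtained in the first paragraph.
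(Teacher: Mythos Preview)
Your bookkeeping in the first paragraph is fine, and the factoring identity is correct (in fact it holds for \emph{any} state tensored with an EPR pair, not just codewords --- it is the ``swap with half an EPR'' trick and does not use the max-mixed property at all). But the purification argument in the second paragraph has a genuine gap that your final paragraph does not close.

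You claim that $\SWAP_{\rQ_i^t\rR_i^t}\ket{\psi_U}$ and $\SWAP_{\rQ_i^t\rR_i^t}\ket{\psi_V}$ are purifications, over prover $t$'s working space, of the \emph{same} reduced state on all non-prover-$t$ registers. This is false if ``non-prover-$t$'' includes $\rR_i^t$: after the swap, $\rR_i^t$ holds whatever was in $\rQ_i^t$, and that content depends on whether $U_i^t$ or $V_S^t$ was applied. If instead you exclude $\rR_i^t$ from the non-prover-$t$ side, then Uhlmann only gives you \emph{some} unitary $Z$ on $\rP^t\otimes\rR_i^t\otimes\overline{\rR}_i^t$ relating the two states --- but you need the \emph{specific} unitaries $(U_i^t)^\dagger$ and $(V_S^t)^\dagger$ to yield close states, and there is no mechanism in your argument forcing $Z$ to match these. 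Your proposed fix, that $W=V_S^t(U_i^t)^\dagger$ ``approximately acts trivially on $\rQ_i^t$,'' is exactly the statement to be proved; saying the error is governed by the codespace bound is circular without an argument explaining \emph{how} the codespace projector constrains the action of $W$ on $\rQ_i^t$.

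The paper's proof supplies precisely that missing mechanism. It expands $D_{i,S}^t(C_i^t)^\dagger$ in the single-qubit Pauli basis on $\rR_i^t$ as $\Id\otimes W^1+X\otimes W^2+Y\otimes W^3+Z\otimes W^4$, identifies $W^1=\Id$, and then uses the error-detection property of the code: for any non-identity Pauli $E$ on a single share, $\CK_i\,(E_{\rR_i^t}\otimes\Id)\ket{\varphi_i^s}=0$ whenever $\ket{\varphi_i^s}$ lies in the codespace. Since the $W^\ell$ act outside the code registers, this kills the $X,Y,Z$ terms on $\ket{\varphi_i^s}$, giving $\CK_i\ket{\varphi_{i,S}}=\CK_i\ket{\varphi_i}+(\text{terms supported on }\ket{\varphi_i^f})$, and the triangle inequality finishes. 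The Pauli decomposition together with the error-detection property --- not merely the max-mixed marginal --- is the idea your proposal is missing.
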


\begin{proof}
For any $t\in[r]$, $i\in [n]$ and set $S\subseteq[n]$ such that $i\in S$ let 
\begin{equation}\label{eq:def-varphi}
\ket{\varphi_{i}}\,:=\,\bigotimes_{p=1}^r C_{i}^{p} \ket{\tilde{\Psi}}\qquad\text{and}\qquad \ket{\varphi_{i,S}}\,:=\,D_{i,S}^t\Big(\bigotimes_{p\neq t}  C_{i}^{p}\Big) \ket{\tilde{\Psi}},
\end{equation}
where for ease of notation the dependence on $t$ of $\ket{\varphi_{i}}$ and $\ket{\varphi_{i,S}}$ is left implicit. By definition, this strategy's success probability in test~\ref{step:test2} of the protocol is exactly 
\begin{align*}
\frac{1}{r}\sum_{t=1}^r \frac{1}{n}\sum_{i=1}^n \, \frac{1}{\binom{n-1}{k-1}} \sum_{S:\, i\in S} \frac{1}{2}\Big(&\bra{\Psi}\Big(\bigotimes_{p=1}^r U_i^p\Big)^\dagger \CHECK_{\rQ_i^1\cdots\rQ_i^r}\Big(\bigotimes_{p=1}^r U_i^p\Big) \ket{\Psi}\\
&+\bra{\Psi}\Big(V_{S}^t \bigotimes_{p\neq t} U_i^p \Big)^\dagger\CHECK_{\rQ_i^1\cdots\rQ_i^r}\Big(V_{S}^t\bigotimes_{p\neq t} U_i^p\Big)\ket{\Psi}\Big).
\end{align*}
Let $\CK_i:=\CHECK_{\rR_i^1\cdots\rR_i^r}$. Given the definition of $C_i^t$ and $D_{i,S}^t$ in~\eqref{eq:def-cd}, success $1-\eps$ in test~\ref{step:test2} of the protocol can be rewritten as 
\begin{equation}\label{eq:cidj-2}
\frac{1}{n}\sum_{i=1}^n \,\frac{1}{\binom{n-1}{k-1}} \sum_{S:\,i\in S}  \frac{1}{2}\Big(\bra{\varphi_{i}}\CK_i\ket{\varphi_{i}}+\bra{\varphi_{i,S}}\CK_i\ket{\varphi_{i,S}}\Big)\,\geq \,1-\eps_t,
\end{equation}
where the $\eps_t$ satisfy $(1/r)(\eps_1+\cdots+\eps_r) = \eps$. 
Decompose the action of the unitary $D_{i,S}^t(C_i^t)^\dagger$ as
\begin{equation}\label{eq:cidj-1}
D_{i,S}^t(C_i^t)^\dagger \,=\, \Id_{\rR_i^t}\otimes W_{i,S}^1 + X_{\rR_i^t}\otimes W_{i,S}^2 + Y_{\rR_i^t}\otimes W_{i,S}^3 + Z_{\rR_i^t}\otimes W_{i,S}^4,
\end{equation}
where the Pauli operators $\{\Id,X,Y,Z\}$ act on the $i$-th auxiliary register $\rR_i^t$ associated with the $t$-th prover, and the $W_{i,S}^\ell$ are arbitrary operators (not necessarily unitary) of norm at most $1$ acting on the remaining registers $\rQ_1^t\cdots \rQ_n^t$ and $\rS^t$. Note that both $C_i^t$ and $D_{i,S}^t$ are such that $\Tr_{\rR_i^t}(C_i^t)= \Tr_{\rR_i^t}(D_{i,S}^t)=\Id_{\rQ_1^t\cdots \rQ_n^t\rS^t}$, hence $W_{i,S}^1 = \Id$. Let 
$$\ket{\varphi_i^s} \,:=\, \big(\CK_i\otimes \Id\big)\ket{\varphi_i}\qquad\text{and}\qquad\ket{\varphi_i^f} \,:=\, \big(\big(\Id-\CK_i\big)\otimes \Id\big)\ket{\varphi_i},$$
so that $\ket{\varphi_i} = \ket{\varphi_i^s}+\ket{\varphi_i^f}$. By assumption the code $C$ corrects all single-qubit Pauli errors, and since by definition the reduced density of $\ket{\varphi_i^s}$ on registers $\rR_i^1\cdots\rR_i^r$ is in the codespace, for  any single-qubit Pauli error $E_{R_i^t}\in\{X,Y,Z\}$ acting on register $\rR_i^t$,
\begin{equation}\label{eq:cidj-3}
\CHECK_{\rR_i^1\cdots\rR_i^r}\big(E_{\rR_i^t}\otimes \Id \big)\ket{\varphi_i^s}\,=\,0.
\end{equation}
As a consequence, starting from the definition of $\ket{\varphi_{i,S}}$ and using the decomposition~\eqref{eq:cidj-1} we get
\begin{align}
\CK_i\ket{\varphi_{i,S}} &= \CK_i\cdot\Big(D_{i,S}^t \big(C_i^t\big)^\dagger \otimes \Id \Big)\ket{\varphi_i}\notag\\
&=\CK_i\cdot\Big( \Id_{\rR_i^t}\otimes \Id + X_{\rR_i^t}\otimes W_{i,S}^2 + Y_{\rR_i^t}\otimes W_{i,S}^3  + Z_{\rR_i^t}\otimes W_{i,S}^4 \Big)\ket{\varphi_i}\notag\\
&=  \CK_i\otimes \Id  \ket{\varphi_i}+ \big(\CK_i\cdot X_{\rR_i^t}\otimes W_{i,S}^2  + \CK_i \cdot Y_{\rR_i^t}\otimes W_{i,S}^3 + \CK_i\cdot Z_{\rR_i^t}\otimes W_{i,S}^4\big)\ket{\varphi_i^s}\notag\\
&\qquad+ \big(\CK_i \cdot X_{\rR_i^t}\otimes W_{i,S}^2 + \CK_i\cdot Y_{\rR_i^t}\otimes W_{i,S}^3  + \CK_i \cdot Z_{\rR_i^t}\otimes W_{i,S}^4 \big)\ket{\varphi_i^f}\notag\\
&= \CK_i\otimes \Id  \ket{\varphi_i}+ \big(\CK_i\cdot X_{\rR_i^t}\otimes W_{i,S}^2  +\CK_i\cdot Y_{\rR_i^t}\otimes W_{i,S}^3 + \CK_i\cdot Z_{\rR_i^t}\otimes W_{i,S}^4\big) \ket{\varphi_i^f},\label{eq:cidj-4}
\end{align}
where the last equality follows from~\eqref{eq:cidj-3} and the fact that the $W_{i,S}^j$ do not act on $\rR_i^t$. Eq.~\eqref{eq:cidj-2} implies that both
\begin{equation}\label{eq:cidj-4a}
\frac{1}{n}\sum_{i=1}^n \,\frac{1}{\binom{n-1}{k-1}} \sum_{S:\,i\in S} \big\|\ket{\varphi_i^f}\big\|^2\,\leq\,2\eps_t
\end{equation}
and
\begin{equation}\label{eq:cidj-4b}
\frac{1}{n}\sum_{i=1}^n \,\frac{1}{\binom{n-1}{k-1}} \sum_{S:\,i\in S} \big\|(\Id-\CHECK_{\rR_i^1\cdots\rR_i^r})\ket{\varphi_{i,S}}\big\|^2\,\leq\, 2\eps_t,
\end{equation}
where we used that $\CHECK_{\rR_i^1\cdots\rR_i^r}$ is a projection. Using the triangle inequality as
\begin{align*}
 \big\|\ket{\varphi_{i,S}} -   \ket{\varphi_i}\big\|&\leq  \big\|\ket{\varphi_{i,S}} -\CK_i \ket{\varphi_{i,S}}\big\| + 
\big\|\CK_i\ket{\varphi_{i,S}} -\CK_i  \ket{\varphi_i}\big\| + \big\|\CK_i\ket{\varphi_{i}} -  \ket{\varphi_i}\big\| 
\end{align*}
we get
\begin{equation}\label{eq:cidj-5}
\frac{1}{n}\sum_{i=1}^n \,\frac{1}{\binom{n-1}{k-1}} \sum_{S:\,i\in S}\, \big\|\ket{\varphi_{i,S}} -  \ket{\varphi_i}\big\|^2\,\leq\, 3\big(2\eps_t+9\cdot2\eps_t + 2\eps_t) \,=\,O(\eps_t),
\end{equation}
where the first bound is obtained from~\eqref{eq:cidj-4b}, the second from~\eqref{eq:cidj-4},~\eqref{eq:cidj-4a} and $\|\CK_i\|,\|W_{i,j}^\ell\|\leq 1$, and the third from the definition of $\ket{\varphi_i^f}$ and~\eqref{eq:cidj-4a}. Recalling the definition of $\ket{\varphi_{i}}$ and $\ket{\varphi_{i,S}}$ in~\eqref{eq:def-varphi},~\eqref{eq:cidj1} is proved by noting that the operator $(\Id \otimes_{p\neq t}C_i^p)$ is unitary and hence its application does not modify the Euclidean norm. 

The proof of~\eqref{eq:cidj2} follows the same steps. Defining vectors $\ket{\varphi_{T,S}}$ and $\ket{\varphi_{T,S'}}$ and using that~\eqref{eq:cidj-2} is satisfied for every $i\in T$ we can decompose $D_{T,S}^t (D_{T,S'}^t)^\dagger$ as in~\eqref{eq:cidj-1}, except now the decomposition involves all $|T|$-qubit Pauli operators on registers $\rR_i^t$ for $i\in T$. The different qubits are checked independently, and we can define $\ket{\varphi_{T,S'}^s} := (\otimes_{i\in T} \CK_i)\ket{\varphi_{T,S'}}$. The remainder of the derivation follows the same steps, leading to~\eqref{eq:cidj2} (where factors polynomial in $k$ are hidden in the $O(\cdot)$ notation, using that $k$ is a constant independent of $n$).
\end{proof}

For any $i\in [n]$ let $ \mathcal{F}_{i}$ be the completely positive trace non-increasing map, acting on all provers' registers, defined by
\begin{equation}\label{eq:def-fi}
 \mathcal{F}_{i}: \,\sigma\,\mapsto\, \Big(\big(\bigotimes_{j=1}^r X_{i}^j\big)^\dagger \CK_{\rQ_i^1\cdots\rQ_i^r}\big(\bigotimes_{j=1}^r X_{i}^j\big)\Big)\,\sigma\,\Big( \big(\bigotimes_{j=1}^r X_{i}^j\big)^\dagger \CK_{\rQ_i^1\cdots\rQ_i^r}\big(\bigotimes_{j=1}^r X_{i}^j\big)\Big)^\dagger.
\end{equation}
Here we use the symbol $X_i^j$ to represent any of $U_i^j$ or $V_S^j$ for any $S$ containing $i$; we leave the dependence of $\mathcal{F}_i$ on the choice of $X_i^j$ implicit as all bounds proved will hold irrespective of that choice. We also write $\mathcal{X}_i^j : \sigma\to X_i^j\sigma (X_i^j)^\dagger$ for the CPTP map associated with $X_i^j$. 
Note that, in addition to the presence of the $\CK$ operator, the difference between the maps $\mathcal{F}_i$ and e.g. $\otimes_j \mathcal{C}_i^j$ is that in the former the $t$ registers $\rQ_i$ and $\rR_i$ are not swapped; in particular $\mathcal{F}_i$ acts as identity on $\rR_i$. 

Our second claim shows that the property that the qubits extracted from the provers' strategies through the maps $X_i^j$ are in the codespace remains preserved even after many layers of application of the $\mathcal{F}_i$.

\begin{claim}\label{claim:check}
Suppose the strategy $(U_i^j,V_{S}^j,\ket{\Psi})$ succeeds in test~\ref{step:test2} with probability at least $1-\eps$. Let $s$ be an integer and $i_1,\ldots,i_s\in [n]$. Then
\begin{equation}\label{eq:check} \Tr\Big( \Big(\bigcirc_{\ell=1}^{s} \mathcal{F}_{i_\ell}\Big)(\tilde{\rho}\big)\Big)\,=\, 1-O\big(sn^k\eps).
\end{equation}
\end{claim}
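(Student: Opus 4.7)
The plan is to proceed by induction on $s$; the base case $s=0$ is trivial since $\Tr(\tilde\rho)=1$. For the inductive step, observe that $\mathcal{F}_i(\sigma) = Q_i \sigma Q_i$ where
\begin{equation*}
Q_i \,:=\, \Big(\bigotimes_{j=1}^r X_i^j\Big)^\dagger \CK_{\rQ_i^1\cdots\rQ_i^r}\Big(\bigotimes_{j=1}^r X_i^j\Big)
\end{equation*}
is a projection (being the conjugation of a projection by a unitary). Setting $\ket{\tilde\Psi_\ell} := Q_{i_\ell}\cdots Q_{i_1}\ket{\tilde\Psi}$, the orthogonality of $Q_{i_\ell}$ and $(I-Q_{i_\ell})$ yields the telescoping identity
\begin{equation*}
1 - \Tr\Big(\Big(\bigcirc_{\ell=1}^s \mathcal{F}_{i_\ell}\Big)(\tilde\rho)\Big) \,=\, \sum_{\ell=1}^s \big\|(I-Q_{i_\ell})\ket{\tilde\Psi_{\ell-1}}\big\|^2,
\end{equation*}
so it suffices to show that each summand is $O(n^k\eps)$.

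For the baseline $\ell=1$, the assumed success probability $\ge 1-\eps$ in test~\ref{step:test2}, combined with the $1/2$ weight from test selection and the uniform averages over $i\in[n]$, $S\ni i$, and prover $t\in[r]$, implies that both $\tfrac{1}{n}\sum_i \bra{\tilde\Psi}(U_i)^\dagger\CK_{\rQ_i}(U_i)\ket{\tilde\Psi}$ and its type-(b) counterpart are $\ge 1-O(\eps)$. Passing to the worst case over $(i,S,t)$ costs a factor $n\cdot r\cdot\binom{n-1}{k-1}=O(n^k)$, giving $\|(I-Q_i)\ket{\tilde\Psi}\|^2 = O(n^k\eps)$ for every $i$ and every admissible choice of $X_i^j\in\{U_i^j\}\cup\{V_S^j: S\ni i\}$.

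The main step is then to extend the baseline bound to $\|(I-Q_{i_\ell})\ket{\tilde\Psi_{\ell-1}}\|^2 \le O(n^k\eps)$ uniformly in $\ell$: one needs that the earlier projections do not amplify the rejection probability of the $i_\ell$-th codespace check. The obstacle is that different $Q_i$ do not commute, since each $X_i^j$ acts on prover $j$'s entire register. The plan is to leverage Claim~\ref{claim:cidj} to replace, up to $O(n^k\eps)$ error on $\ket{\tilde\Psi}$, each $Q_{i_j}$ by the swap-extraction projection $\tilde Q_{i_j} := \big(\bigotimes_t C_{i_j}^t\big)^\dagger \CK_{\rR_{i_j}^1\cdots\rR_{i_j}^r}\big(\bigotimes_t C_{i_j}^t\big)$. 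Because the swaps inside $C_i^t$ target the disjoint ancilla qubits $\rR_i^t$, composing the $\tilde Q_i$'s morally reduces to performing independent codespace checks on disjoint ancilla subsystems, whose rejection probabilities add.

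The hard part will be retaining the linear-in-$s$ dependence. A naive layer-by-layer substitution of $\tilde Q$ for $Q$ combined with a triangle inequality already accumulates an $O(s^2 n^k\eps)$ penalty, because the displacement $\|\ket{\tilde\Psi_\ell}-\ket{\tilde\Psi}\|$ may itself grow with $\ell$. To obtain the sharper $O(sn^k\eps)$ bound required here, one must exploit the explicit Pauli decomposition~\eqref{eq:cidj-1} of $D_{i,S}^t(C_i^t)^\dagger$ together with the fact that the codespace projector annihilates all single-qubit Pauli errors (as in~\eqref{eq:cidj-3}), so that the replacement errors introduced by substituting $\tilde Q$ for $Q$ live almost entirely outside the codespace and are therefore killed by the very next successful codespace check, preventing the errors from cascading. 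Summing $s$ contributions of $O(n^k\eps)$ each then yields~\eqref{eq:check}.
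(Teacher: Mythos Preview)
Your telescoping identity and baseline bound are correct, but the route you propose for bounding the general summand $\|(I-Q_{i_\ell})\ket{\tilde\Psi_{\ell-1}}\|^2$ has a concrete gap and is in any case far more elaborate than what is needed. The gap: your swap-based projections $\tilde Q_{i_j}$ do \emph{not} decouple as you claim. Each $C_{i_j}^t = (U_{i_j}^t)^\dagger \SWAP_{\rQ_{i_j}^t\rR_{i_j}^t} U_{i_j}^t$ still involves $U_{i_j}^t$, which acts on the $t$-th prover's \emph{entire} private register; hence $\tilde Q_{i_j}$ and $\tilde Q_{i_{j'}}$ fail to commute for $j\neq j'$, and the assertion that ``composing the $\tilde Q_i$'s morally reduces to performing independent codespace checks on disjoint ancilla subsystems'' is unfounded. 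The Pauli-decomposition mechanism you borrow from Claim~\ref{claim:cidj} controls $(C_i^t - D_{i,S}^t)\ket{\tilde\Psi}$; it does not address the non-commutativity of successive $Q_{i_\ell}$'s, which is the actual obstruction in your setup. So your sketched mechanism for preventing the $O(s^2)$ cascade does not go through as stated.

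The paper avoids all of this with a much simpler induction. Rather than telescoping from the outside --- which forces you to evaluate $(I-Q_{i_\ell})$ on the perturbed state $\ket{\tilde\Psi_{\ell-1}}$ --- it peels off the \emph{innermost} map $\mathcal{F}_{i_1}$: expand $\mathcal{F}_{i_1}(\tilde\rho)$ as $\tilde\rho$ minus a correction involving $(\Id - \CK_{\rQ_{i_1}})$, and observe that since the remaining maps $\bigcirc_{\ell=2}^{s+1}\mathcal{F}_{i_\ell}$ are trace non-increasing, the correction's contribution after passing through them is bounded by a quantity depending only on $(I-Q_{i_1})$ applied to the \emph{original} state $\tilde\rho$, which is $O(n^k\eps)$ by the $s=1$ case. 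The induction hypothesis then handles $\bigcirc_{\ell=2}^{s+1}\mathcal{F}_{i_\ell}(\tilde\rho)$. Linear growth in $s$ falls out immediately, with no $\tilde Q$'s, no Pauli decompositions, and no commutation structure required.
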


\begin{proof}
We prove~\eqref{eq:check} by induction on $s$. For $s=1$ it follows immediately by first applying  Claim~\ref{claim:cidj} (at most) $r$ times to replace each $X_{i_1}^j$ in the definition of $\mathcal{F}_{i_1}$ by $U_{i_1}^j$, and then using the assumption of success in the test, which ensures that the extracted qubits are close to the code space. Suppose~\eqref{eq:check} verified for some $s$, and let $K$ be the constant implicit in the $O(\cdot)$ notation; we show it for $s+1$. Writing $\CK_{i_1} = \Id - (\Id-\CK_{i_1})$ and using that $\mathcal{F}_{i_1}$ reduces to the identity once the operator $\CK_{\rQ_{i_1}^1\cdots \rQ_{i_1}^r}$ is removed,
\begin{align*}
 \Tr\Big( \Big(\bigcirc_{\ell=1}^{s+1} \mathcal{F}_{i_\ell}\Big)(\tilde{\rho}\big)\Big) &=  \Tr\Big( \Big(\bigcirc_{\ell=2}^{s+1} \mathcal{F}_{i_\ell}\Big)(\tilde{\rho}\big)\Big) \\
&\qquad - \Tr\Big( \Big(\bigcirc_{\ell=2}^{s+1} \mathcal{F}_{i_\ell}\Big)\circ \Big( \bigotimes_{j=1}^r \mathcal{X}_{i_1}^j \Big)^\dagger\Big((\Id-\CK_{\rQ_{i_1}^1\cdots \rQ_{i_1}^r})\Big( \bigotimes_{j=1}^r \mathcal{X}_{i_1}^j \Big) (\tilde{\rho}\big)\CK_{\rQ_{i_1}^1\cdots \rQ_{i_1}^r}\Big)\Big) \\
&\geq 1-Ksn^k\eps -  \Tr\Big( (\Id-\CK_{\rQ_{i_1}^1\cdots \rQ_{i_1}^r})\Big( \bigotimes_{j=1}^r \mathcal{X}_{i_1}^j \Big) (\tilde{\rho}\big)\Big) \\
&\geq 1-Ksn^k\eps - O\big(n^k\eps\big),
\end{align*}
where the first inequality uses the induction hypothesis for the first term, and that the $\mathcal{F}_{i_1}$ are trace non-increasing for the second, and the last follows from the case $s=1$ of~\eqref{eq:check}. Provided $K$ is chosen large enough this establishes the induction step and proves the claim. 
\end{proof}

The next claim has a similar flavor as the previous one, that the qubits extracted from the provers' strategies lie in the codespace is preserved even after application of a sequence of maps $\mathcal{C}_i^t$ or $\mathcal{D}_{i,S}^t$ on one of the provers' registers. 

\begin{claim}\label{claim:collapse}
There exists a constant $c_1>0$ depending on $k$ only such that the following holds. Suppose the strategy $(U_i^j,V_{S}^j,\ket{\Psi})$ succeeds in test~\ref{step:test2} with probability at least $1-\eps$. Let $s$ be an integer and $i_1,\ldots,i_s\in [n]$. Then for any $t\in [r]$ and choice of $\mathcal{Y}_{i_\ell}^j \in \{\mathcal{C}_{i_\ell}^j,\,\mathcal{D}_{i_\ell,S_{\ell}}^j|\,i_\ell\in S_\ell\}$ for $j\in[r]$ and $\ell\in[s]$,
\begin{equation}\label{eq:collapse}
\Tr\Big( \CK_{i_s} \Big(\Big( \bigcirc_{\ell=1}^{s} \mathcal{Y}_{i_\ell}^t \Big)\bigotimes_{j\neq t} \mathcal{Y}_{i_s}^j \Big) \big(\tilde{\rho}\big)\Big) = 1-O\big(s^2 n^{c_1}\eps\big).
\end{equation}
\end{claim}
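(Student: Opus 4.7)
The plan is to derive Claim~\ref{claim:collapse} from Claim~\ref{claim:check}, with Claim~\ref{claim:cidj} serving as the bridge. The strategy has two phases: a normalization phase that replaces every $\mathcal{Y}_{i_\ell}^j$ by the canonical choice $\mathcal{C}_{i_\ell}^j$, followed by a rewriting of the fully canonical expression into a product of $\mathcal{F}$-maps that Claim~\ref{claim:check} directly controls.

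For normalization, I would write the LHS as $\|\CK_{i_s} Z\ket{\tilde{\Psi}}\|^2$ with $Z = (Y_{i_s}^t Y_{i_{s-1}}^t \cdots Y_{i_1}^t)\otimes(\bigotimes_{j\neq t} Y_{i_s}^j)$ and telescope towards its canonical version $Z'$ one operator at a time. The $r-1$ replacements on provers $j\neq t$ are immediate: those operators commute with the prover-$t$ block, so the difference acts on $\ket{\tilde{\Psi}}$ directly and is bounded via~\eqref{eq:cidj2} by $O(\sqrt{n^k\eps})$ each. The $s$ replacements on prover $t$ yield terms of the form $\|(Y_{i_\ell}^t - C_{i_\ell}^t)N_\ell\ket{\tilde{\Psi}}\|$, where $N_\ell = C_{i_{\ell-1}}^t \cdots C_{i_1}^t$. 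Although $(Y-C)$ and $N_\ell$ do not commute -- both act on prover $t$ -- the key observation is that each $C_{i_{\ell'}}^t$ in $N_\ell$ (with $\ell'<\ell$) touches only the auxiliary register $\rR_{i_{\ell'}}^t$ and prover $t$'s private registers, leaving the EPR pair on $\rR_{i_\ell}^t\overline{\rR}_{i_\ell}^t$ intact. This lets me adapt the Pauli-decomposition argument of Claim~\ref{claim:cidj}: I would expand $D_{i_\ell,S}^t C_{i_\ell}^t - \Id$ as a combination of single-qubit Paulis on $\rR_{i_\ell}^t$ as in~\eqref{eq:cidj-1} and apply the code-annihilation property~\eqref{eq:cidj-3}, bounding each step by $O(\sqrt{n^k\eps})$. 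Accumulating $O(s)$ telescoping terms and squaring gives total cost $O(s^2 n^k\eps)$.

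For the canonical case, the $\SWAP$ inside each $C_{i_s}^j$ turns the codespace check on $\rR_{i_s}$ into one on $\rQ_{i_s}$, giving $\|\CK_{i_s} Z'\ket{\tilde{\Psi}}\|^2 = \Tr(\mathcal{F}_{i_s}(\sigma))$ with $\sigma = (\mathcal{C}_{i_{s-1}}^t\circ\cdots\circ\mathcal{C}_{i_1}^t)(\tilde{\rho})$ and $\mathcal{F}$ defined using $X=U$. A parallel telescoping would substitute each inner $\mathcal{C}_{i_\ell}^t$ by $\mathcal{F}_{i_\ell}$ (each substitution again costing $O(n^k\eps)$ by the same Pauli/code-detection trick, since $\mathcal{F}_{i_\ell}$ is essentially the $\CK$-projected version of the extraction), reducing the expression to $\Tr(\bigcirc_{\ell=1}^s\mathcal{F}_{i_\ell}(\tilde{\rho}))$, which Claim~\ref{claim:check} lower-bounds by $1-O(sn^k\eps)$. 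The main obstacle is the prover-$t$ hybrid inside the normalization phase; my resolution hinges on the distinct-$\rR^t$-register structure of the intermediate $C_{i_{\ell'}}^t$'s, which preserves the EPR pair required by Claim~\ref{claim:cidj}'s Pauli decomposition, so the identities~\eqref{eq:cidj-1} and~\eqref{eq:cidj-3} can be reinvoked at each step. The constant $c_1$ absorbs the accumulated $n^k$ factors across the $O(s)$ hybrid steps.
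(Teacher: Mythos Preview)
Your two-phase plan is close in spirit to the paper's argument (both route the problem through Claim~\ref{claim:check} via repeated use of Claim~\ref{claim:cidj}), but each phase has a genuine gap.

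\textbf{Normalization phase.} The ``EPR pair on $\rR_{i_\ell}^t\overline{\rR}_{i_\ell}^t$ is intact'' observation does not justify invoking the code-annihilation property~\eqref{eq:cidj-3}. That identity requires the state on \emph{all} $r$ registers $\rR_{i_\ell}^1\cdots\rR_{i_\ell}^r$ to lie in the codespace, which in turn requires that the strategy applied to the modified state $N_\ell\ket{\tilde\Psi}$ still passes test~(b). But $N_\ell$ is far from the identity (it genuinely swaps qubits out into the $\rR$-registers), so the original $1-\eps$ success bound does not transfer. What you would actually need at step $\ell$ is that $(\bigotimes_j C_{i_\ell}^j)N_\ell\ket{\tilde\Psi}$ is close to codespace on $\rR_{i_\ell}$ --- and that is exactly Claim~\ref{claim:collapse} for the shorter sequence $i_1,\ldots,i_\ell$. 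You can break the circularity by an outer induction on $s$, but then the per-step cost is $O(\ell\sqrt{n^{c_1}\eps})$, not the $O(\sqrt{n^k\eps})$ you claim, and the accumulated exponent in $s$ degrades.

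\textbf{Rewriting phase.} The substitution $\mathcal{C}_{i_\ell}^t \to \mathcal{F}_{i_\ell}$ is not a simple telescoping: $\mathcal{C}_{i_\ell}^t$ acts on prover $t$ alone and performs a $\SWAP$ into $\rR_{i_\ell}^t$, whereas $\mathcal{F}_{i_\ell}$ acts on all provers, performs no $\SWAP$, but inserts a $\CK$ projector on the $\rQ$-registers. Bridging these two requires the third code property listed in Section~\ref{sec:prelim} --- that every codeword has totally mixed single-qubit marginals --- which lets one argue that, after projection onto the codespace, swapping $\rQ_{i_\ell}^t$ with the (maximally mixed) $\rR_{i_\ell}^t$ leaves the reduced state invariant. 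You never invoke this property, and without it there is no mechanism to trade the $\SWAP$ for the $\CK$.

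The paper sidesteps both issues by running a single downward hybrid on $s'$ from $s$ to $1$: at step $s'$ the state is $\tilde\rho_{s'}=\bigcirc_{\ell<s'}\mathcal{F}_{i_\ell}(\tilde\rho)$, on which Claim~\ref{claim:check} guarantees success $1-O(sn^k\eps)$ in test~(b), so Claim~\ref{claim:cidj} applies with a controlled loss. The totally-mixed property is then used explicitly to convert the $\mathcal{F}_{i_{s'}}$ (projector, no swap) into $\mathcal{Y}_{i_{s'}}^t$ (swap, no projector). This is the missing ingredient in your second phase and the reason your first phase's error does not stay uniform in $\ell$.
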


\begin{proof}
For the proof we show that the following holds by downwards induction on $s'$ from $s$ to $1$:
\begin{equation}\label{eq:collapse-1}
\Tr\Big( \CK_{i_s}\Big(\Big( \bigcirc_{\ell=s'}^{s} \mathcal{Y}_{i_\ell}^t \Big)\bigotimes_{j\neq t} \mathcal{Y}_{i_s}^j \Big) \circ\Big( \bigcirc_{\ell=1}^{s'-1} \mathcal{F}_{i_\ell} \Big)\big(\tilde{\rho}\big)\Big) = 1-O\big(ss' n^{c_1}\eps\big).
\end{equation}
Eq.~\eqref{eq:collapse-1} for $s'=s$ is equivalent to~\eqref{eq:check}, so Claim~\ref{claim:check} proves the base case for the induction. If $s'=1$ it reduces to~\eqref{eq:collapse}, which is what we need to prove. Assume thus~\eqref{eq:collapse-1} verified for $s'+1$, and prove it for $s'$. 
 By~\eqref{eq:check} applied with $s=s'$ we see that the strategy $(U_i^j,V_{S}^j,\tilde{\rho}_{s'})$, where $\tilde{\rho}_{s'} =  \bigcirc_{\ell=1}^{s'} \mathcal{F}_{i_\ell} (\rho)$ (it will be more convenient to leave $\tilde{\rho}_{s'}$ unnormalized) succeeds in test~\ref{step:test2} with probability at least $1-O(sn^{k}\eps)$. Here in the definition of $\mathcal{F}_{i_{s'}}$ we define $\mathcal{X}_{i_{s'}}^t$ as $ \mathcal{U}_{i_{s'}}^t$ if  $\mathcal{Y}_{i_{s'}}^t =  \mathcal{C}_{i_{s'}}^t$ and $ \mathcal{V}_{S'}^t$ if  $\mathcal{Y}_{i_{s'}}^t =  \mathcal{D}_{i_{s'},S'}^t$; the remaining $\mathcal{X}_{i_{\ell}}^j$ can be chosen arbitrarily.
Applying Claim~\ref{claim:cidj} we get 
\begin{equation}\label{eq:collapse-2}
\Big\|\Big( \bigcirc_{\ell=s'+1}^{s} \mathcal{Y}_{i_\ell}^t \Big)\bigotimes_{j\neq t} \mathcal{Y}_{i_s}^j \Big)\big(\tilde{\rho}_{s'}\big) - \Big( \bigcirc_{\ell=s'+1}^{s} \mathcal{Y}_{i_\ell}^t \Big)\bigotimes_{j\neq t} \mathcal{D}_{i_s,S}^j \Big)\big(\tilde{\rho}_{s'}\big)\Big\|_1 = O(sn^{2k}\eps),
\end{equation}
where for $S$ we choose any set containing both $i_{s'}$ and $i_{s}$. Applying the claim once more, this time starting from the strategy $(U_i^j,V_{S}^j,\tilde{\rho}_{s'-1})$, we have
\begin{equation}\label{eq:collapse-3}
\Big\| \mathcal{F}_{i_{s'}}\big(\tilde{\rho}_{s'-1}\big) -  \Big(\mathcal{X}_{i_{s'}}^t\bigotimes_{j\neq t}^r \mathcal{V}_{S}^j\Big)^\dagger \Big(\CK_{\rQ_{i_{s'}}^1\cdots \rQ_{i_{s'}}^r}\Big( \mathcal{X}_{i_{s'}}^t\bigotimes_{j\neq t}^r \mathcal{V}_{S}^j \Big)\big(\tilde{\rho}_{s'-1}\big)\CK_{\rQ_{i_{s'}}^1\cdots \rQ_{i_{s'}}^r}\Big)\Big\|_1 = O(sn^{2k}\eps).
\end{equation}
Combining~\eqref{eq:collapse-2} and~\eqref{eq:collapse-3} by the triangle inequality and evaluating the overlap with $\CK_{i_s}$ we get
\begin{align}
\Big|\Tr &\Big( \CK_{i_{s}}\Big( \Big( \bigcirc_{\ell=s'+1}^{s} \mathcal{Y}_{i_\ell}^t \Big)\bigotimes_{j\neq t} \mathcal{Y}_{i_s}^j \Big) \big(\tilde{\rho}_{s'}\big)\Big)\notag\\
& - \Tr\Big(\CK_{i_s} \Big( \bigcirc_{\ell=s'+1}^{s} \mathcal{Y}_{i_\ell}^t \circ\big(\mathcal{X}_{i_{s'}}^t\big)^\dagger \Big) \Big(\CK_{\rQ_{i_{s'}}^1\cdots \rQ_{i_{s'}}^r}\Big( \mathcal{X}_{i_{s'}}^t\bigotimes_{j\neq t}^r \mathcal{V}_{S}^j \Big)\big(\tilde{\rho}_{s'-1}\big)\CK_{\rQ_{i_{s'}}^1\cdots \rQ_{i_{s'}}^r}\Big)\Big)\Big| = O(sn^{2k}\eps),\label{eq:collapse-4}
\end{align}
where we also used the definition of $\mathcal{D}_{i_{s'},S}^j$ to simplify the successive application of unitaries $V_S^j$ and $(V_S^j)^\dagger$ on provers $j\neq t$ in the second term above. The fact that every codeword of the code $C$ has a reduced density on any single qubit that is totally mixed implies that  if we trace out registers $\rQ_{i_{s'}}^{\neq t}$ and $\overline{\rR}_{i_{s'}}^t$ in the (unnormalized) density $\tilde{\sigma}_{s'} = \CK_{i_{s'}}(\mathcal{X}_{i_{s'}}^t \bigotimes_{j\neq t} \mathcal{V}_{S}^j )(\tilde{\rho}_{s'-1})\CK_{i_{s'}}$, the registers ${\rR}_{i_{s'}}^t {\rQ}_{i_{s'}}^t$ are jointly in the totally mixed state. Swapping the two registers thus leaves the state invariant, and from~\eqref{eq:collapse-4} we get
\begin{align}\label{eq:collapse-5}
\Big|\Tr &\Big( \CK_{i_{s}}\Big( \Big( \bigcirc_{\ell=s'+1}^{s} \mathcal{Y}_{i_\ell}^t \Big)\bigotimes_{j\neq t} \mathcal{Y}_{i_s}^j \Big) \big(\tilde{\rho}_{s'}\big)\Big)\notag\\
& - \Tr\Big(\CK_{i_s} \Big( \bigcirc_{\ell=s'+1}^{s} \mathcal{Y}_{i_\ell}^t \Big) \Big(\CK_{\rR_{i_{s'}}^t \rQ_{i_{s'}}^{\neq t}}\Big( \mathcal{Y}_{i_{s'}}^t\bigotimes_{j\neq t}^r \mathcal{V}_{S}^j \Big)\big(\tilde{\rho}_{s'-1}\big)\CK_{\rR_{i_{s'}}^1 \rQ_{i_{s'}}^{\neq t}}\Big)\Big)\Big| = O(sn^{2k}\eps).
\end{align}
Finally, using Claim~\ref{claim:check} once more, we can remove the application of $\CK_{\rR_{i_{s'}}^1 \rQ_{i_{s'}}^{\neq t}}$ in the second term in~\eqref{eq:collapse-5} to obtain 
\begin{equation}\label{eq:collapse-6}
\Big|\Tr \Big( \CK_{i_{s}}\Big( \Big( \bigcirc_{\ell=s'+1}^{s} \mathcal{Y}_{i_\ell}^t \Big)\bigotimes_{j\neq t} \mathcal{Y}_{i_s}^j \Big) \big(\tilde{\rho}_{s'}\big)\Big) - \Tr\Big(\CK_{i_s} \Big( \Big( \bigcirc_{\ell=s'}^{s} \mathcal{Y}_{i_\ell}^t \Big) \bigotimes_{j\neq t} \mathcal{V}_{S}^j \Big)\big(\tilde{\rho}_{s'-1}\big)\Big)\Big| = O(sn^{2k}\eps).
\end{equation}
 To conclude, since the registers $\rQ_{i_{s'}}^{\neq t}$ are being traced out, and using Claim~\ref{claim:cidj} for the strategy $(U_i^j,V_{S}^j,\tilde{\rho}_{s'-1})$ the application of $\bigotimes_{j\neq t} \mathcal{V}_{S}^j$ in the second term in~\eqref{eq:collapse-6} can be replaced by $\bigotimes_{j\neq t} \mathcal{Y}_{i_s}^j$ for $\mathcal{Y}$ of our choice. Using the induction hypothesis~\eqref{eq:collapse-1} to bound the first term in~\eqref{eq:collapse-6}, this proves~\eqref{eq:collapse-1} for $s'$ and establishes the induction step, proving the claim. 
\end{proof}

The following corollary is a simple consequence of Claim~\ref{claim:cidj}. 

\begin{corollary}\label{cor:cidj-s}
Suppose the strategy $(U_i^j,V_{S}^j,\ket{\Psi})$ succeeds in test~\ref{step:test2} with probability at least $1-\eps$. Let $s$ be an integer and $i_1,\ldots,i_s\in [n]$. Then for any $t\in [r]$, set $S$ containing $i_s$, and choice of $\mathcal{Y}_{i_\ell}^t \in \{\mathcal{C}_{i_\ell}^t,\,\mathcal{D}_{i_\ell,S_{\ell}}^t|\,i_\ell\in S_\ell\}$ for $\ell\in[s-1]$,
\begin{equation}\label{eq:cidj-s}
\Big\|  \Big(\Big(  \mathcal{C}_{i_s}^t \bigcirc_{\ell=1}^{s-1} \mathcal{Y}_{i_\ell}^t \Big)\otimes \Id\Big)\big(\tilde{\rho}\big) - \Big(\Big(  \mathcal{D}_{i_s,S}^t \bigcirc_{\ell=1}^{s-1} \mathcal{Y}_{i_\ell}^t \Big)\otimes \Id\Big)\big(\tilde{\rho}\big)\Big\|_1 =  O\big(s^2n^{c_1+k}\eps\big),
\end{equation}
where $c_1$ is as in Claim~\ref{claim:collapse}. 
\end{corollary}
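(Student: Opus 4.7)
The plan is to deduce Corollary~\ref{cor:cidj-s} from Claims~\ref{claim:cidj} and~\ref{claim:collapse} by a short ``unitary insertion'' reduction that places the statement in the regime where Claim~\ref{claim:collapse}'s codespace guarantee applies directly. The key structural observation is that $\mathcal{C}_{i_s}^t$ and $\mathcal{D}_{i_s,S}^t$ both act only on prover $t$'s registers (including the auxiliary register $\rR_{i_s}^t$), and hence commute with every CPTP map supported on the remaining provers. Since conjugation by a unitary on those provers leaves the trace norm on the left-hand side of~\eqref{eq:cidj-s} unchanged, for any choice of unitaries $U^j$ on prover $j\neq t$ (with associated CPTP maps $\mathcal{U}^j$) I can equivalently rewrite that trace norm as
\begin{equation*}
\big\|\big(\mathcal{C}_{i_s}^t-\mathcal{D}_{i_s,S}^t\big)\Big(\big(\bigcirc_{\ell=1}^{s-1}\mathcal{Y}_{i_\ell}^t\big)\otimes\bigotimes_{j\neq t}\mathcal{U}^j\,(\tilde\rho)\Big)\big\|_1.
\end{equation*}

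My concrete choice will be $U^j:=D_{i_s,S}^j$ for every $j\neq t$, using the same $S$ as in the statement of the corollary. Writing $\sigma':=\big(\bigcirc_{\ell<s}\mathcal{Y}_{i_\ell}^t\otimes\bigotimes_{j\neq t}\mathcal{D}_{i_s,S}^j\big)(\tilde\rho)$, I will then apply Claim~\ref{claim:collapse} twice --- first with $\mathcal{Y}_{i_s}^t:=\mathcal{C}_{i_s}^t$, then with $\mathcal{Y}_{i_s}^t:=\mathcal{D}_{i_s,S}^t$, in both cases with $\mathcal{Y}_{i_s}^j:=\mathcal{D}_{i_s,S}^j$ for $j\neq t$ --- to obtain the symmetric codespace estimates $\Tr(\CK_{i_s}\,\mathcal{C}_{i_s}^t(\sigma'))\geq 1-O(s^2 n^{c_1}\eps)$ and $\Tr(\CK_{i_s}\,\mathcal{D}_{i_s,S}^t(\sigma'))\geq 1-O(s^2 n^{c_1}\eps)$. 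In words: whichever of $\mathcal{C}_{i_s}^t$ or $\mathcal{D}_{i_s,S}^t$ is applied last on prover $t$, the auxiliary registers $\rR_{i_s}^1\cdots\rR_{i_s}^r$ will end up in the code space of $C$ with high probability.

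With both densities close to the codespace in hand, I will recycle the Pauli-decomposition argument of Claim~\ref{claim:cidj}. Writing $D_{i_s,S}^t(C_{i_s}^t)^\dagger=\Id+E$ with $E=\sum_{P\in\{X,Y,Z\}}P_{\rR_{i_s}^t}\otimes W^P$ supported only on non-identity single-qubit Pauli errors on $\rR_{i_s}^t$, the single-error correcting property of $C$ forces $\CK_{i_s}\,E\,\CK_{i_s}=0$. Expanding $\mathcal{D}_{i_s,S}^t(\sigma')-\mathcal{C}_{i_s}^t(\sigma')=E\,\mathcal{C}_{i_s}^t(\sigma')+\mathcal{C}_{i_s}^t(\sigma')\,E^\dagger+E\,\mathcal{C}_{i_s}^t(\sigma')\,E^\dagger$, replacing $\mathcal{C}_{i_s}^t(\sigma')$ by its projection $\CK_{i_s}\mathcal{C}_{i_s}^t(\sigma')\CK_{i_s}$ using the first codespace estimate, and controlling the resulting off-codespace residuals via the second codespace estimate on $\mathcal{D}_{i_s,S}^t(\sigma')$ --- together with the averaging over sets of size $k$ that produced the factor $n^k$ in Claim~\ref{claim:cidj} --- should bound the full trace norm by $O(s^2 n^{c_1+k}\eps)$, as required. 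The main subtlety I expect is the alignment between the ``free'' unitaries $\mathcal{U}^j$ inserted in the first step and the $\mathcal{Y}_{i_s}^j$'s prescribed by Claim~\ref{claim:collapse}: only when $\mathcal{U}^j=\mathcal{D}_{i_s,S}^j$ with the same set $S$ does Claim~\ref{claim:collapse} apply uniformly to both branches and yield simultaneous codespace guarantees for $\mathcal{C}_{i_s}^t(\sigma')$ and $\mathcal{D}_{i_s,S}^t(\sigma')$; once this alignment is secured, the rest is routine trace-norm bookkeeping.
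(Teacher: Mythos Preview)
Your proposal is correct and shares the same underlying idea as the paper's argument, but the paper's proof is considerably more streamlined. The paper observes that Claim~\ref{claim:collapse}, with the freedom to choose $\mathcal{Y}_{i_s}^j$ arbitrarily for each $j$, directly establishes that the strategy $(U_i^j,V_S^j,\cdot)$ applied to the modified state $(\bigcirc_{\ell<s}\mathcal{Y}_{i_\ell}^t\otimes\Id)(\tilde\rho)$ passes test~\ref{step:test2} with probability $1-O(s^2n^{c_1}\eps)$; Claim~\ref{claim:cidj} then applies verbatim as a black box to this new (still pure) state, yielding the bound. Your ``unitary insertion'' of $\mathcal{D}_{i_s,S}^j$ on the other provers and your subsequent re-running of the Pauli decomposition are precisely what happens \emph{inside} the proof of Claim~\ref{claim:cidj} when it is invoked on the modified state --- so you are unpacking a lemma the paper calls directly. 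Two small remarks: (i) since the state $\sigma'$ in your argument is pure, you can work with vectors $C_{i_s}^t|\psi\rangle$ and $D_{i_s,S}^t|\psi\rangle$ exactly as in Claim~\ref{claim:cidj}, which simplifies the ``trace-norm bookkeeping'' you anticipate; (ii) the factor $n^k$ you invoke does not actually arise from your direct approach (Claim~\ref{claim:collapse} already gives pointwise bounds for the specific $i_s,S$), but appears in the paper's route only because Claim~\ref{claim:cidj} is applied as a black box and its conclusion carries that factor from the averaging over questions in test~\ref{step:test2}.
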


\begin{proof}
Using the freedom in the choice of the operators $\mathcal{Y}$,~\eqref{eq:collapse} from Claim~\ref{claim:collapse} shows that the strategy $(U_i^j,V_{S}^j,(\bigcirc_{\ell=1}^{s-1} \mathcal{Y}_{i_\ell}^t )\otimes \Id)(\tilde{\rho}))$ succeeds with probability $1- O\big(s^2 n^{c_1}\eps\big)$ in test~\ref{step:test2} of the protocol. The corollary then follows directly from Claim~\ref{claim:cidj}. 
\end{proof}

Our final claim shows that if the provers have a high success probability in both tests of protocol~$P$ the state $\sigma$ defined in~\eqref{eq:def-witness} must have low energy with respect to the local Hamiltonian $H$. 

\begin{claim}\label{claim:low-energy}
There exists a constant $c_2>0$ depending on $k$ only such that the following holds.
Let $\delta,\eps>0$ be such that the provers succeed in test~\ref{step:test1} of protocol~$P$ with probability at least $1-\delta$, and in test~\ref{step:test2} with probability at least $1-\eps$. Then 
$$\frac{1}{m}\,\Tr\big(H\sigma\big) \,=\, O\big(\delta + n^{c_2}  \eps\big).$$
\end{claim}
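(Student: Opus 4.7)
For each $j\in [m]$ let
\[
\tau_j \,:=\, \Bigl(\bigotimes_{i\in S_j}\DEC_{\rR_i^1\cdots\rR_i^r}\Bigr)\,\Tr_{\mathrm{rest}}\Bigl(\bigl(\bigotimes_t D_{S_j,S_j}^t\bigr)\,\tilde\rho\,\bigl(\bigotimes_t D_{S_j,S_j}^t\bigr)^\dagger\Bigr).
\]
Since $V_{S_j}^t$ does not touch the $\rR$-registers and $\SWAP_{\rQ_{S_j}^t\rR_{S_j}^t}$ transports prover $t$'s would-be answer from $\rQ_{S_j}^t$ onto $\rR_{S_j}^t$, a direct verification shows that $\tau_j$ is exactly the decoded $k$-qubit state that test~\ref{step:test1} effectively measures against $H_j$ when the set $S_j$ is chosen; consequently the test~\ref{step:test1} success hypothesis yields $\frac1m\sum_j\Tr(H_j\tau_j)\le \delta$. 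The plan is to prove, for each $j$, the trace-norm bound $\|\sigma_{S_j}-\tau_j\|_1 = O(n^{c_2'}\eps)$ for some constant $c_2'$ depending only on $k$; since $\|H_j\|\le 1$ the claim then follows from
\[
\tfrac1m\Tr(H\sigma) \,=\, \tfrac1m\textstyle\sum_j\Tr(H_j\sigma_{S_j}) \,\le\, \tfrac1m\textstyle\sum_j\bigl(\Tr(H_j\tau_j)+\|\sigma_{S_j}-\tau_j\|_1\bigr) \,=\, O(\delta+n^{c_2'}\eps).
\]

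To bound $\|\sigma_{S_j}-\tau_j\|_1$ I would incrementally transform the operator product $\bigotimes_t(C_n^t\cdots C_1^t)$ appearing in the definition of $\sigma$ into the product $\bigotimes_t D_{S_j,S_j}^t$ appearing in the definition of $\tau_j$, in three phases. \emph{Phase 1 (reorder).} Within each prover $t$'s sequence, move every $C_i^t$ with $i\notin S_j$ so that it is applied strictly after (i.e.\ to the left of) every $C_{i'}^t$ with $i'\in S_j$, via adjacent swaps $C_i^tC_{i'}^t\rightsquigarrow C_{i'}^tC_i^t$ realized by the four-step rewrite
\[
C_iC_{i'}\,\to\, D_{i,S}\,C_{i'}\,\to\, D_{i,S}D_{i',S}\,=\,D_{i',S}D_{i,S}\,\to\, D_{i',S}\,C_i\,\to\, C_{i'}C_i,
\]
in which $S$ is any common $k$-set containing $\{i,i'\}$ (such a set exists for $k\ge 2$), each $C\!\leftrightarrow\! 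D$ transition is a single invocation of Corollary~\ref{cor:cidj-s}, and the middle equality uses~\eqref{eq:grow-T}. Each adjacent swap costs $O(n^{c_1+k+2}\eps)$ in trace norm, and $O(n^2)$ swaps per prover suffice. \emph{Phase 2 (drop).} The $C_i^t$ with $i\notin S_j$ now sit at the outer end of each prover's sequence, act unitarily on $\rP^t\rR_i^t$, and are followed only by the final trace-out of exactly those two registers with no further operation touching them; the identity $\Tr_{\rP^t\rR_i^t}(U(\cdot)U^\dagger)=\Tr_{\rP^t\rR_i^t}(\cdot)$ for any unitary $U$ on $\rP^t\rR_i^t$ then lets me delete these operators one by one at zero cost. \emph{Phase 3 (substitute).} A single application of Corollary~\ref{cor:cidj-s} per $i\in S_j$ per prover converts the remaining $\prod_{i\in S_j}C_i^t$ into $\prod_{i\in S_j}D_{i,S_j}^t=D_{S_j,S_j}^t$ (the last equality again from~\eqref{eq:grow-T}), yielding exactly the operator product defining $\tau_j$.

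The main technical subtlety, and the one I expect will require most care, is that Corollary~\ref{cor:cidj-s} swaps only the \emph{final} operator in a sequence acting on a \emph{single} prover (tensored with identity on the other provers), whereas $\sigma$ involves full-length sequences on all $r$ provers simultaneously. I plan to get around this by systematically exploiting that operators on different provers act on disjoint registers and so commute: when modifying prover $t$'s sequence I would first commute the other provers' operators past prover $t$'s, apply Corollary~\ref{cor:cidj-s} to the prover-$t$ sub-sequence ending at the operator to be swapped (with everything after it absorbed, by commutativity, into a contractive CPTP map acting on the output), and then re-commute the other provers' operators back into position. Monotonicity of the trace norm under CPTP maps guarantees that the single-prover error bound from the corollary survives this procedure unchanged. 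The accumulated error from the $O(n^2)$ adjacent swaps of Phase~1 and the $O(k)$ substitutions of Phase~3, compounded over the $r=O(1)$ provers, works out to $O(n^{c_1+k+4}\eps)$, which establishes Claim~\ref{claim:low-energy} with $c_2=c_1+k+4$.
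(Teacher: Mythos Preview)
Your proposal is correct and follows essentially the same approach as the paper. Both arguments reduce to showing that $\sigma_{S_j}$ is close in trace norm to the decoded state the verifier sees in test~\ref{step:test1}, and both accomplish this via a bubble-sort on the product $C_n^t\cdots C_1^t$: adjacent operators are transposed using the four-step rewrite $C_iC_{i'}\to D_{i,S}C_{i'}\to D_{i,S}D_{i',S}=D_{i',S}D_{i,S}\to D_{i',S}C_i\to C_{i'}C_i$ (each arrow an invocation of Corollary~\ref{cor:cidj-s}, the equality being~\eqref{eq:grow-T}), and the irrelevant $C_i^t$ ($i\notin S_j$) are eliminated by unitary invariance of the partial trace. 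The paper organizes the sort by pulling each $D_{i_{s+1},S_j}$ downwards to merge with $D_{T_s,S_j}$; you instead push the $C_i^t$ with $i\notin S_j$ outwards and convert at the end --- this is the same set of transpositions counted differently. Two minor remarks: your ``one by one'' deletion in Phase~2 should really be read as dropping the whole outer block at once, since all the outer $C_i^t$ share the register $\rP^t$; and only $O(kn)$ rather than $O(n^2)$ adjacent swaps are actually needed, though this only affects the constant $c_2$.
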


\begin{proof}
For any $i\in[n]$ we abbreviate $\DEC_{\rR_i^1\cdots\rR_i^r}$ as $\DEC_i$. By definition, 
$$\sigma = \Big(\bigotimes_{i=1}^n \DEC_{\rR_i^1\cdots\rR_i^r} \Big)\Big(\Tr_{\cup_t(\cup_i(\overline{\rR_i^t}\rQ_i^t) \rS^t) }\big(\tau\big)\Big),$$
where
$$\tau\,:=\,   \Big(\bigotimes_{j=1}^r \big(\cC_n^j\circ \cdots  \circ \cC_2^j \circ \cC_1^j \big)\Big)\big(\tilde{\rho}\big).$$
Fix a local term $H_j$ acting on $k$ qubits $S:=S_j=\{i_1,\ldots,i_k\}$, and let $U = [n]\backslash S$. Let $T_0=\emptyset$, and for $s=1,\ldots,k$ let $T_s = \{i_1,\ldots,i_s\}$. We show the following by induction on $s=0,\ldots,k$: 
\begin{equation}\label{eq:low-energy-0}
 \Big\| \Tr_{\rQ_U\overline{\rR}_U}\big(\tau\big) - \Tr_{\rQ_U\overline{\rR}_U}\Big(\bigotimes_{j=1}^r \big(\bigcirc_{\substack{i=1\\i\notin T_s}}^n \cC_i^j \big)\Big)\circ\Big(\bigotimes_{j=1}^r \mathcal{D}_{T_s,S}^j\Big)\big(\tilde{\rho}\big)\Big)\Big\|_1\,=\, O\big(s n^{c'_2}\eps\big),
\end{equation}
for some constant $c'_2>0$. The equation is trivially true for $s=0$. Suppose it true for some $s<k$. Let $\tilde{\rho}_s = \bigotimes_{j=1}^r \mathcal{D}_{T_s,S}^j(\tilde{\rho})$. We again proceed by induction on $\ell = i_{s+1},\ldots,1$ to show
\begin{align}
 \Big\|  \Tr_{\rQ_U\overline{\rR}_U}\Big(\Big(\bigotimes_{j=1}^r &\big(\bigcirc_{\substack{i=1\\i\notin T_s}}^{i_{s+1}} \cC_i^j \big)\Big)\big(\tilde{\rho}_s\big)\Big) -  \Tr_{\rQ_U\overline{\rR}_U}\Big(\Big(\bigotimes_{j=1}^r \big(\bigcirc_{\substack{i=\ell\\i\notin T_s}}^{i_{s+1}-1} \cC_i^j \big)\circ \mathcal{D}_{i_{s+1},S}^j  \bigcirc_{\substack{i=1\\i\notin T_s}}^{\ell-1} \cC_i^j \big)\Big)\big(\tilde{\rho}_s\big)\Big)\Big\|_1\notag\\
&= O\big((i_{s+1}+1-\ell) n^{c'_2-1}\eps\big).\label{eq:low-energy-1}
\end{align}
For $\ell = i_{s+1}$ the bound follows from Corollary~\ref{cor:cidj-s} applied to each of the provers, provided $c'_2$ is chosen large enough. For $\ell=1$, using $ \mathcal{D}_{i_{s+1},S}^j  \circ  \mathcal{D}_{T_{s},S}^j =  \mathcal{D}_{T_{s+1},S}^j $, together with the triangle inequality it establishes the induction step for the proof of~\eqref{eq:low-energy-0}. Suppose~\eqref{eq:low-energy-1} verified for some $\ell>1$. To prove it for $\ell-1$, we apply Corollary~\ref{cor:cidj-s} to the state 
$$\Big(\bigotimes_{j=1}^r \big(\bigcirc_{\substack{i=\ell\\i\notin T_s}}^{i_{s+1}-1} \cC_i^j \big)\circ \mathcal{D}_{i_{s+1},S}^j  \bigcirc_{\substack{i=1\\i\notin T_s}}^{\ell-1} \cC_i^j \big)\Big)\big(\tilde{\rho}_s\big)$$
 a number of times: first, the maps $\cC_{\ell-1}^j $ are replaced by $\cD_{\ell-1,S'}^j$ for some $S'$ containing both $\ell-1$ and $i_{s+1}$. Second, the $\mathcal{D}_{i_{s+1},S}^j$  are replaced by $\cD_{i_{s+1},S'}^j$. Next we use the relation $\cD_{i_{s+1},S'}^j\circ\cD_{\ell-1,S'}^j = \cD_{\ell-1,S'}^j\circ \cD_{i_{s+1},S'}^j$ and perform the same replacements in reverse. This establishes~\eqref{eq:low-energy-1} for $\ell-1$ (provided $c'_2$ is chosen large enough) and completes the induction step. We have now proven~\eqref{eq:low-energy-0}.

Using the definition of $\mathcal{D}_{T_k,S}$ and that both $H_j$ and $\DEC_{i_s}$, for $s=1,\ldots,k$, do not act on any of the registers in $\rQ_U$ or $\overline{\rR}_U$, from~\eqref{eq:low-energy-0} we get 
\begin{equation}\label{eq:low-energy-2}
\Big| \Tr\big(H_j\sigma\big) - \Tr\Big( H_j \Big(\bigotimes_{\ell=1}^k \DEC_{i_\ell}\Big) \Big( \bigotimes_{t=1}^r V_{S_j}^t\Big)\, \tilde{\rho}\, \Big(\bigotimes_{t=1}^r V_{S_j}^t\Big)^\dagger \Big)\Big| \,=\, O\big(n^{c'_2+1}\eps\big).
\end{equation}
By definition success $1-\delta$ in test~\ref{step:test1} of protocol~$P$ implies
$$
\frac{1}{m}\sum_{S_j = \{i_1,\ldots,i_k\}} \Tr\Big( H_j \Big(\bigotimes_{s=1}^k \DEC_{i_s} \Big) \Big( \Big(\bigotimes_{t=1}^r V_{S_j}^t \Big)\tilde{\rho}\Big(\bigotimes_{t=1}^r V_{S_j}^t \Big)^\dagger\Big)\Big) \leq \delta,
$$
which together with~\eqref{eq:low-energy-2} proves the claim for an appropriate choice of $c_2$.  
\end{proof}

Lemma~\ref{lem:soundness} now follows directly from Claim~\ref{claim:low-energy} and the fact that any strategy with success $1-\eps$ in Protocol~$P$ must have success probability at least $1-2\eps$ in each of the two tests~\ref{step:test1} and~\ref{step:test2} of the protocol.

\bibliographystyle{alpha}
\bibliography{qma_mip}
\end{document}